\newtheorem{thm}{Theorem}[section]
\newtheorem{proposition}[thm]{Proposition}
\newtheorem{theorem}[thm]{Theorem}
\theoremstyle{definition}
\DeclareMathOperator{\tr}{tr}
\title{Noncommutative Boussinesq and NLS type 2-  and  3-simplex maps}
\date{}
\author{S. Konstantinou-Rizos\thanks{skonstantin84@gmail.com} ~and A.~A. Kutuzova\thanks{a.kutuzova6@uniyar.ac.ru}}
\affil{Centre of Integrable Systems, P.G. Demidov Yaroslavl State University, Yaroslavl, Russia}
\patchcmd{\subequations}{\alph{equation}}{\alphalph{\value{equation}}}{}{}
\begin{document}

\maketitle

\begin{abstract}
  We construct noncommutative maps related to the Boussinesq and Nonlinear Schr\"odinger (NLS)
equations with their variables belonging to a noncommutative division ring. We show that the noncommutative Boussinesq type map satisfies the Yang--Baxter equation, and it can be squeezed down to a noncommutative
version of the Boussinesq lattice equation. Moreover, we show that the noncommutative NLS type map is a Zamolodchikov tetrahedron map.
\end{abstract}

\bigskip

%02.30.Ik Integrable systems
%02.90.+p Other topics in mathematical methods in physics
%03.65.Fd Algebraic methods
%16T25 Yang--Baxter equations
%81R12 Groups and algebras in quantum theory and relations with integrable systems

\begin{quotation}
\noindent{\bf PACS numbers:}
02.30.Ik, 02.90.+p, 03.65.Fd.
\end{quotation}

\begin{quotation}
\noindent{\bf Mathematics Subject Classification 2020:}
35Q55, 16T25.
\end{quotation}

\begin{quotation}
\noindent{\bf Mathematics Subject Classification 2020:}
35Q55, 16T25.
\end{quotation}

\begin{quotation}
\noindent{\bf Keywords:} Noncommutative Yang--Baxter maps, noncommutative tetrahedron Zamolodchikov maps, noncommutative Boussinesq lattice system, noncommutative Darboux--B\"acklund transformations, noncommutative NLS type equations.
\end{quotation}

\allowdisplaybreaks

\section{Introduction}\label{intro}
The Boussinesq and the Nonlinear Schr\"odinger --- in both their continuous and discrete versions --- are undoubtedly two of the most popular equations in the modern theory of integrable systems. They are met in many applications, including hydrodynamics, soliton theory, the propagation of light in nonlinear optical fibers. Moreover, some of their solutions admit similar behaviour \cite{Clarkson},  and they share many integrable properties, such as having Lax representations, admitting Darboux and B\"acklund transformations, and are related to solutions of the Yang--Baxter equation (see, e.g., \cite{Sokor-2020-2, Sokor-Sasha, Sokor-Pap}).

The Yang--Baxter and the Zamolodchikov tetrahedron equations are the second and third, respectively, members of the family of $n$-simplex equations, which are fundamental equations of mathematical physics and are strictly connected to the theory of integrable systems. In particular, Yang--Baxter (2-simplex) and Zamolodchikov tetrahedron (3-simplex) maps, which are set-theoretical solutions to the Yang--Baxter and Zamolodchikov tetrahedron equations, respectively, are related to integrable systems via the symmetries of the latter \cite{Kassotakis-Tetrahedron, Pap-Tongas-Veselov}, Darboux--B\"acklund transformations \cite{Sokor-Sasha, Sokor-Pap, MPW}, and other transformations \cite{Kouloukas, Pavlos-Maciej-2, Sokor-Kouloukas}.

Furthermore, a great deal of attention has been paid to the construction and study of noncommutative $n$-simplex maps and their relation to noncommutative integrable systems (see, e.g., \cite{Doliwa-Kashaev, Kassotakis-2023, Kassotakis-Kouloukas-Maciej} and the references therein). This includes the construction of vector versions of NLS and sine-Gordon type Yang--Baxter maps \cite{Sokor-Sasha, MPW}, noncommutative KdV and NLS type Yang--Baxter and tetrahedron maps via Darboux transformations \cite{Giota-Miky, Sokor-2022, Sokor-Nikitina}, Grassmann extensions of NLS, KdV and Boussinesq type Yang--Baxter \cite{Giota-Miky, GKM, Sokor-Sasha-2016}, noncommutative 3D-compatible Yang--Baxter maps \cite{Doliwa-2014, Kassotakis-2023}, noncommutative versions of Yang--Baxter maps from the $\mathcal{H}$ and $\mathcal{F}$ lists \cite{Kassotakis-2023, Kassotakis-Kouloukas}, as well as Yang--Baxter maps related to noncommutative relativistic collisions \cite{Kassotakis-Kouloukas-Maciej}.

In this paper we extend to the noncommutative case some results of the works \cite{Sokor-2020, Sokor-2020-2}, namely we construct fully noncommutative versions of 2- and 3-simplex maps related to the Boussinesq and the Zamolodchikov tetrahedron equations. The proof of the fact that a map with noncommutative variables is an $n$-simplex map by straightforward substitution to the $n$-simplex equation is a very difficult task, even using packages of symbolic computation for simple rational $2$-simplex maps. Therefore, in order to prove that a noncommutative map satisfies the Yang--Baxter equation or the Zamolodchikov tetrahedron equation, we use matrix tri- and six-factorisation problems, respectively \cite{Kouloukas, Sokor-2022}.

In what follows, we list the main results of the paper.
\begin{itemize}
    \item Construction of a new commutative Boussinesq type Yang--Baxter map;
    \item construction of a noncommutative version of the Boussinesq Yang--Baxter map which appeared in \cite{Sokor-2020};
    \item proof that the noncommutative Boussinesq type map satisfies the Yang--Baxter equation;
    \item squeeze down of the noncommutative Boussinesq type Yang--Baxter map to an integrable noncommutative Boussinesq system of partial difference equations;
    \item the construction of a noncommutative map that follows from a Darboux transformation for the NLS equation via the local Yang--Baxter equation;
    \item proof that that obtained noncommutative NLS type map is a Zamolodchikov tetrahedron map.
\end{itemize}

The rest of the text is organised as follows. In the next section we provide the reader with all the necessary information in order to understand the results of this paper. In particular, we introduce the notation used throughout the text, we give the definitions of 2- and 3-simplex equations and explain the relation between 2- and 3-simplex maps (i.e. set-theoretical solutions of the Yang--Baxter and Zamolodchikov tetrahedron equations, respectively) and matrix refactorisation problems. In Section \ref{Sec-BSQ}, we employ the Lax matrix of the discrete Boussinesq lattice equation and, using the correspondence approach \cite{Igonin-Sokor}, we construct a new, noninvolutive, Boussinesq type 2-simplex (Yang--Baxter) map. Moreover, we construct a noncommutative version of the Boussinesq type Yang--Baxter map presented in \cite{Sokor-2020} with its variables belonging to a noncommutative division ring. We show that the latter maps can be squeezed down to a noncommutative version of the Boussinesq lattice system. Section \ref{sec-NLS} deals with the contruction of a noncommutative NLS type 3-simplex (tetrahedron Zamolodchikov) map. Specifically, using a Darboux transformation for the noncommutative NLS equation \cite{Sokor-PaulX}, we construct a noninvolutive NLS type map via the local Yang--Baxter equation. Then, we prove that this map is a 3-simplex map on a noncommutative division ring.

\section{Preliminaries}
\subsection{Notation}\label{notation}
We shall adopt the following notation:
\begin{itemize}
    \item By $\mathcal{X}$ we denote an arbitrary set, whereas by Latin italic letters (i.e. $x, y, u, v$ etc.) the elements of $\mathcal{X}$, with the exception of the `spectral parameter' which is denoted by the Greek letter $\lambda$.
    
    \item By $\mathcal{X}^n$ we denote the Cartesian product $\mathcal{X}^n=\underbrace{\mathcal{X}\times \mathcal{X}\times\dots\times \mathcal{X}}_{n}$.
    
    \item By $\mathfrak{R}$ we denote a noncommutative division ring, and its elements are denoted by bold italic Latin letters (i.e. $\bm{x}, \bm{y}, \bm{u}$ etc.). That is, $\mathfrak{R}$ is a ring with multiplicative identity $1$ where commutativity with respect to mutliplication is not assumed, and every nonzero element $\bm{x}$ has inverse $\bm{x}^{-1}$, i.e. $\bm{x}\bm{x}^{-1}=\bm{x}^{-1}\bm{x}=1$. The centre of a division ring will be denoted by $Z(\mathfrak{R})=\{a\in\mathfrak{R}:\forall\bm{x}\in\mathfrak{R},a\bm{x}=\bm{x}a\}$.
    
    \item Matrices will be denoted by Latin capital straight letters (i.e. ${\rm A}, {\rm B}, {\rm C}$) etc. Moreover, matrix operators are denoted by capital caligraphic letters (for example, $\mathcal{L}=D_x-\rm{U}$).

    \item For a $2 \times 2$ matrix ${\rm A}$, the notation ${\rm A}^n_{ij}$ means the $n \times n$ extension of matrix ${\rm A}$ with its elements lying on the intersection of the $i,j$ rows with the $i,j$ columns. The rest elements of the $n \times n$ extension are 1 or 0 using the following rule: All elements of rows and columns where the elements of the original $2 \times 2$ matrix ${\rm A}$ are located are 0. For instance, for the matrix ${\rm A}=\begin{pmatrix} 
a & b \\ 
c & d 
\end{pmatrix}$, we have ${\rm A}^3_{13}\begin{pmatrix} 
 a & 0 & b\\ 
0 & 1 & 0\\
c & 0 & d
\end{pmatrix}$.

    \item Let $u$ be a function of two discrete variables $n,m\in\mathbb{Z}$, i.e. $u=u(n,m)$. By indices $u_{i,j}$ we denote the shifts with respect to $n$ and  $m$: $u_{i,j}=u(n+i,m+j)$.
\end{itemize}

\subsection{Set-theoretical Yang--Baxter equation and Lax representations}
Let $\mathcal{X}$ be a set. A map $Y:\mathcal{X}^2\rightarrow \mathcal{X}^2$ is called a \textit{Yang--Baxter map} if it satisfies the Yang--Baxter equation
\begin{equation}\label{eq_YB}
Y^{12}\circ Y^{13}\circ Y^{23}=Y^{23}\circ Y^{13}\circ Y^{12}.
\end{equation}
The terms $Y^{12}$, $Y^{13}$, $Y^{23}$ in \eqref{eq_YB} are maps $\mathcal{X}^3\to \mathcal{X}^3$ defined as follows 
$$
Y^{12}(x,y,z)=\big(u(x,y),v(x,y),z\big),~~~
Y^{23}(x,y,z)=\big(x,u(y,z),v(y,z)\big),~~~
Y^{13}(x,y,z)=\big(u(x,z),y,v(x,z)\big),
$$
where $x,y,z\in \mathcal{X}$.

Now, let $\mathbb{C}$ be the field of complex numbers. 
A \emph{parametric Yang--Baxter map} is a Yang--Baxter map with its variables assigned with two complex variables $a,b\in\mathbb{C}$. Specifically, it is a map
$$
Y_{a,b}\colon (\mathcal{X}\times\mathbb{C})\times (\mathcal{X}\times\mathbb{C})\to (\mathcal{X}\times\mathbb{C})\times (\mathcal{X}\times\mathbb{C}),
$$
namely,
$$
Y_{a,b}((x,a),(y,b))=((u(x,a),(y,b),a),(v(x,a),(y,b),b)),
$$
which we usually write in a more compact form as
\begin{gather}
\label{ParamYB}
Y_{a,b}(x,y)=\big(u_{a,b}(x,y),\,v_{a,b}(x,y)\big),\quad x,y\in \mathcal{X},
\quad a,b\in\mathbb{C},
\end{gather}
satisfying the parametric Yang--Baxter equation
\begin{gather}
\label{pybeq}
Y^{12}_{a,b}\circ Y^{13}_{a,c} \circ Y^{23}_{b,c}=
Y^{23}_{b,c}\circ Y^{13}_{a,c} \circ Y^{12}_{a,b}\quad\text{for all }\,a,b,c\in\mathbb{C}.
\end{gather}

The most celebrated parametric Yang--Baxter map is the Adler map \cite{Adler}
$$
Y:(x,y)\rightarrow \left(y-\frac{a-b}{x+y}, x+\frac{a-b}{x+y}\right)
$$
which is related to the discrete potential KdV equation
\begin{equation}\label{pkdv}
(f_{11}-f)(f_{01}-f_{10})=a-b,
\end{equation}
via the symmetries of  the latter \cite{Pap-Tongas-Veselov}.

Now, let ${\rm L}(x;a,\lambda)$, $x\in\mathcal{X},a\in\mathbb{C}$, be a square matrix and $Y_{a,b}(x,y)=\big(u_{a,b}(x,y),\,v_{a,b}(x,y)\big)$ a parametric Yang--Baxter map. Suppose that $u=u_{a,b}(x,y)$ and $v=v_{a,b}(x,y)$ obey the equation
\begin{gather}
\label{eq-Lax}
{\rm L}(u,a,\lambda){\rm L}(v,b,\lambda)={\rm L}(y,b,\lambda){\rm L}(x,a,\lambda)
\end{gather}
for all values of $x,y,a,b,\lambda$. 
Then, matrix ${\rm L}(x;a,\lambda)$ 
is called a \emph{Lax matrix} for $Y_{a,b}$ \cite{Veselov2}, and the matrix refactorisation problem \eqref{eq-Lax} is called its Lax representation \cite{Dimakis-Hoissen-2015}. 

It is important to note that not every map satisfying \eqref{eq-Lax} is a parametric Yang--Baxter map. In particular, we have the following.

\begin{theorem} (Kouloukas--Papageorgiou \cite{Kouloukas})\label{KP}
Let $Y_{a,b}$ be a map with Lax representation \eqref{eq-Lax}. If the following \textit{matrix trifactorisation problem}
\begin{gather}
\label{trifac}
{\rm L}(u,a,\lambda){\rm L}(v,b,\lambda){\rm L}(w,c,\lambda)={\rm L}(z,c,\lambda){\rm L}(y,b,\lambda){\rm L}(x,a,\lambda),\quad\text{for all}~~a,b,c\in\mathbb{C}
\end{gather}
implies $u=x$, $v=y$, $w=z$, then
$Y_{a,b}$ satisfies the parametric Yang--Baxter equation \eqref{ParamYB}.
\end{theorem}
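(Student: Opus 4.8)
The plan is to prove Theorem~\ref{KP} by exploiting the associativity of matrix multiplication to read the trifactorisation problem \eqref{trifac} in two different ways, each of which produces one application of the Lax representation \eqref{eq-Lax}. Concretely, I would start from the right-hand side of the parametric Yang--Baxter equation \eqref{pybeq}, namely $Y^{23}_{b,c}\circ Y^{13}_{a,c}\circ Y^{12}_{a,b}$, apply it to a generic triple $(x,y,z)\in\mathcal{X}^3$, and track how the intermediate variables are transformed at each stage. Using the definitions of $Y^{12}$, $Y^{13}$, $Y^{23}$ together with the fact that each application of $Y$ corresponds, via \eqref{eq-Lax}, to a local swap of two neighbouring Lax factors, I would show that the ordered product of the three Lax matrices built from the output of the right-hand side equals the product built from the input $(x,y,z)$ after a fixed reshuffling of the factors.

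The key combinatorial step is to check that both sides of \eqref{pybeq}, when translated into matrix products through \eqref{eq-Lax}, yield \emph{the same} trifactorisation identity of the form \eqref{trifac}. That is, if I denote by $(\tilde x,\tilde y,\tilde z)$ the image of $(x,y,z)$ under the left-hand side composition and by $(\hat x,\hat y,\hat z)$ its image under the right-hand side composition, then repeated use of \eqref{eq-Lax} shows that
\begin{gather*}
{\rm L}(\tilde x,a,\lambda){\rm L}(\tilde y,b,\lambda){\rm L}(\tilde z,c,\lambda)
={\rm L}(\hat x,a,\lambda){\rm L}(\hat y,b,\lambda){\rm L}(\hat z,c,\lambda),
\end{gather*}
both being equal to a common reordering of ${\rm L}(x,a,\lambda){\rm L}(y,b,\lambda){\rm L}(z,c,\lambda)$. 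This is where the careful bookkeeping of which pair of indices is acted on at each of the three steps matters: one must verify that the sequence of neighbouring transpositions induced by the left-hand side and the sequence induced by the right-hand side factorise the \emph{same} permutation of the three Lax factors, so that the two triple products coincide as functions of $\lambda$.

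Once this common identity is established, the hypothesis of the theorem finishes the argument immediately: the trifactorisation problem \eqref{trifac} is assumed to have a unique solution in the sense that it forces the two ordered triples of arguments to agree, i.e. $\tilde x=\hat x$, $\tilde y=\hat y$, $\tilde z=\hat z$. Since $(x,y,z)$ was arbitrary, this gives equality of the two compositions in \eqref{pybeq}, which is exactly the parametric Yang--Baxter equation. I would also note that the parameters propagate correctly along the simplex indices (the first factor always carries $a$, the second $b$, the third $c$), which is consistent with the way $Y^{12}_{a,b}$, $Y^{13}_{a,c}$, $Y^{23}_{b,c}$ are indexed in \eqref{pybeq}.

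The main obstacle I anticipate is purely bookkeeping rather than conceptual: one must be scrupulous about the order in which the Lax factors are written and about which variables are updated versus carried along at each of the three compositions, since a single misassignment of a spectral parameter or a transposed pair of indices breaks the matching. The underlying principle --- that \eqref{eq-Lax} lets each elementary Yang--Baxter move be realised as an adjacent swap of Lax matrices, and that both sides of \eqref{pybeq} realise the same total permutation --- is standard, so the real work is in verifying that the two chains of local refactorisations produce literally the same trifactorisation identity before invoking the uniqueness hypothesis.
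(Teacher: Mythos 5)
The paper does not actually prove Theorem \ref{KP} --- it is quoted from Kouloukas--Papageorgiou \cite{Kouloukas} --- but your argument is precisely the standard proof from that reference, and it is the same mechanism the paper re-enacts concretely when proving that map \eqref{NC-BSQ-map} is a Yang--Baxter map: both sides of \eqref{pybeq}, applied to $(x,y,z)$, are converted by three adjacent Lax swaps each into factorisations of the common product ${\rm L}(z,c,\lambda){\rm L}(y,b,\lambda){\rm L}(x,a,\lambda)$ in the fixed parameter order $a,b,c$, after which the uniqueness hypothesis identifies the two output triples. Your reading of \eqref{trifac} as a unique-factorisation (cancellation) property is the correct interpretation of the (somewhat loosely typeset) hypothesis, and your bookkeeping plan does close, since each of $Y^{23}_{b,c}$, $Y^{13}_{a,c}$, $Y^{12}_{a,b}$ acts on a pair whose Lax factors are adjacent in the product at the moment it is applied.
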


\subsection{Zamolodchikov tetrahedron and local Yang--Baxter equation}
A map $T:\mathcal{X}^3\rightarrow \mathcal{X}^3$, i.e. $T(x,y,z)=(u(x,y,z),v(x,y,z),w(x,y,z))$, $x,y,z\in\mathcal{X}$,
is called a Zamolodchikov tetrahedron (3-simplex) map if it solves the equation
\begin{equation}\label{Tetrahedron-eq}
    T^{123}\circ T^{145} \circ T^{246}\circ T^{356}=T^{356}\circ T^{246}\circ T^{145}\circ T^{123}.
\end{equation}
This is the 3-simplex or Zamolodhikov tetrahedron equation.
The functions $T^{ijk}:\mathcal{X}^6\rightarrow  \mathcal{X}^6$, $i,j=1,2,3,~i\neq j$, act on the terms $ijk$ of the product $\mathcal{X}^6$ as map $T$ and trivially on the others. For instance,
$$
T^{356}(x,y,z,r,s,t)=(x,y,u(z,s,t),r,v(z,s,t),w(z,s,t)).
$$

Let $a$, $b$ and $c$ be complex parameters. A map $T:(\mathcal{X}\times\mathbb{C})^3\rightarrow (\mathcal{X}\times\mathbb{C})^3$, namely $T:((x,a),(y,b),(z,c))\mapsto ((u((x,a),(y,b),(z,c)),a),(v((x,a),(y,b),(z,c)),b),(w((x,a),(y,b),(z,c)),c))$, is called parametric 3-simplex or tetrahedron map if it solves the parametric 3-simplex equation
\begin{equation}\label{Par-Tetrahedron-eq}
    T^{123}_{a,b,c}\circ T^{145}_{a,d,e} \circ T^{246}_{b,d,f}\circ T^{356}_{c,e,f}=T^{356}_{c,e,f}\circ T^{246}_{b,d,f}\circ T^{145}_{a,d,e}\circ T^{123}_{a,b,c}.
\end{equation}
Parametric 3-simplex maps will be denoted in short as:
\begin{equation}\label{Par-Tetrahedron_map}
 T_{a,b,c}:(x,y,z)\mapsto (u_{a,b,c}(x,y,z),v_{a,b,c}(x,y,z),w_{a,b,c}(x,y,z)),
\end{equation}

Now, let ${\rm L}={\rm L}(x,k,\lambda)$, $x\in\mathcal{X}$,  $k,\lambda\in\mathbb{C}$, be a matrix of the form
\begin{equation}\label{matrix-L}
   {\rm L}(x,k,\lambda)= \begin{pmatrix} 
a_{11}(x,k,\lambda) & a_{12}(x,k,\lambda)\\ 
a_{21}(x,k,\lambda) & a_{22}(x,k,\lambda)
\end{pmatrix},
\end{equation}
for scalar functions $a_{ij}$. Consider the $3\times 3$ and $4\times 4$ extensions of matrix \eqref{matrix-L}, given by
\begin{align}\label{Lij-mat}
   &{\rm L}^3_{12}(x,k,\lambda)=\begin{pmatrix} 
 a_{11}(x,k,\lambda) &  a_{12}(x,k,\lambda) & 0\\ 
a_{21}(x,k,\lambda) &  a_{22}(x,k,\lambda) & 0\\
0 & 0 & 1
\end{pmatrix},\nonumber\\
& {\rm L}^3_{13}(x,k,\lambda)= \begin{pmatrix} 
 a_{11}(x,k,\lambda) & 0 & a_{12}(x,k,\lambda)\\ 
0 & 1 & 0\\
a_{21}(x,k,\lambda) & 0 & a_{22}(x,k,\lambda)
\end{pmatrix},  \\
 &{\rm L}^3_{23}(x,k,\lambda)=\begin{pmatrix} 
   1 & 0 & 0 \\
0 & a_{11}(x,k,\lambda) & a_{12}(x,k,\lambda)\\ 
0 & a_{21}(x,k,\lambda) & a_{22}(x,k,\lambda)
\end{pmatrix},\nonumber
\end{align}
and 
{\small
\begin{align}\label{Lij4}
        &{\rm L}^4_{12}(x,k,\lambda)=\begin{pmatrix}a_{11}(x,k,\lambda) & a_{12}(x,k,\lambda) & 0 & 0 \\ a_{21}(x,k,\lambda) & a_{22}(x,k,\lambda) & 0 & 0 \\ 0 & 0 & 1 & 0 \\ 0 & 0 & 0 & 1\end{pmatrix},\quad
    {\rm L}^4_{13}(x,k,\lambda)=\begin{pmatrix}a_{11}(x,k,\lambda) & 0 & a_{12}(x,k,\lambda) & 0 \\ 0 & 1 & 0 & 0 \\ a_{21}(x,k,\lambda) & 0 & a_{22}(x,k,\lambda) & 0 \\ 0 & 0 & 0 & 1\end{pmatrix},\nonumber\\
        &{\rm L}^4_{23}(x,k,\lambda)=\begin{pmatrix}1 & 0 & 0 & 0 \\ 0 & a_{11}(x,k,\lambda) & a_{12}(x,k,\lambda) & 0 \\ 0 & a_{21}(x,k,\lambda) & a_{22}(x,k,\lambda) & 0 \\ 0 & 0 & 0 & 1\end{pmatrix},\quad {\rm L}^4_{14}(x,k,\lambda)=\begin{pmatrix}a_{11}(x,k,\lambda) & 0 & 0 & a_{12}(x,k,\lambda) \\ 0 & 1 & 0 & 0 \\ 0 & 0 & 1 & 0 \\ a_{21}(x,k,\lambda) & 0 & 0 & a_{22}(x,k,\lambda)\end{pmatrix}, \\
        & {\rm L}^4_{24}(x,k,\lambda)=\begin{pmatrix}1 & 0 & 0 & 0 \\ 0 & a_{11}(x,k,\lambda) & 0 & a_{12}(x,k,\lambda) \\ 0 & 0 & 1 & 0 \\ 0 & a_{21}(x,k,\lambda) & 0 & a_{22}(x,k,\lambda)\end{pmatrix}, \quad
    {\rm L}^4_{34}(x,k,\lambda)=\begin{pmatrix}1 & 0 & 0 & 0 \\ 0 & 1 & 0 & 0 \\ 0 & 0 & a_{11}(x,k,\lambda) & a_{12}(x,k,\lambda) \\ 0 & 0 & a_{21}(x,k,\lambda) & a_{22}(x,k,\lambda)\end{pmatrix},\nonumber
\end{align}}
respectively.

Then, ${\rm L}={\rm L}(x,k,\lambda)$ is a Lax matrix for map \eqref{Par-Tetrahedron_map}, if the latter satisfies \cite{Dimakis-Hoissen-2015}
\begin{equation}\label{Lax-Tetra}
    {\rm L}^3_{12}(u,a,\lambda){\rm L}^3_{13}(v,b,\lambda){\rm L}^3_{23}(w,c,\lambda)= {\rm L}^3_{23}(z,c,\lambda){\rm L}^3_{13}(y,b,\lambda){\rm L}^3_{12}(x,a,\lambda),\quad \text{for any}\quad\lambda\in\mathbb{C}.
\end{equation}

Equation \eqref{Lax-Tetra} is called the \textit{local Yang--Baxter} equation or Maillet--Nijhoff equation \cite{Nijhoff} in Korepanov's form. In this matrix form equation \eqref{Lax-Tetra} was suggested by Korepanov, and deserves to be referred as \textit{Korepanov's equation}. The Korepanov equation allows to construct tetrahedron maps related to integrable systems of mathematical  physics \cite{Kassotakis-Tetrahedron, Sokor-2022} and `classify' tetrahedron maps \cite{Kashaev-Sergeev, Sergeev}. Note that, here, the entries $a_{ij}$ of matrix \eqref{matrix-L} are scalar, however they can  be matrices \cite{Korepanov}).

If a map \eqref{Par-Tetrahedron_map} satisfies the local Yang--Baxter equation \eqref{Lax-Tetra}, then this map is a 3-simplex map if some supplementary condition is satisfied. Specifically, we have the following.

\begin{theorem}\cite{Sokor-2022}\label{six-factorisation}
Let map \eqref{Par-Tetrahedron_map} be a solution of the local Yang--Baxter equation \eqref{Lax-Tetra} for some matrix ${\rm L}={\rm L}(x,a)$. Then, if equation
\begin{align}\label{6-fac}
&{\rm L}^4_{34}(\hat{t},a_6){\rm L}^4_{24}(\hat{s},a_5){\rm L}^4_{14}(\hat{r},a_4){\rm L}^4_{23}(\hat{z},a_3){\rm L}^4_{13}(\hat{y},a_2){\rm L}^4_{12}(\hat{x},a_1)=\nonumber\\
&{\rm L}^4_{34}(t,a_6){\rm L}^4_{24}(s,a_5){\rm L}^4_{14}(r,a_4){\rm L}^4_{23}(z,a_3){\rm L}^4_{13}(y,a_2){\rm L}^4_{12}(x,a_1)
\end{align}
implies the trivial solution $\hat{t}=t$, $\hat{s}=s$, $\hat{r}=r$, $\hat{z}=z$, $\hat{y}=y$ and $\hat{x}=x$, then map \eqref{Par-Tetrahedron_map} is a parametric tetrahedron map.
\end{theorem}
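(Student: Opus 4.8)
The plan is to mirror the proof of Theorem \ref{KP} one dimension higher: apply both sides of the parametric tetrahedron equation \eqref{Par-Tetrahedron-eq} to an arbitrary point $(x,y,z,r,s,t)\in\mathcal{X}^6$, translate the two resulting compositions into products of six $4\times 4$ matrices via the Lax representation, show that these two products coincide, and finally invoke the uniqueness hypothesis \eqref{6-fac} to conclude that the two outputs are equal. Since the left- and right-hand sides of \eqref{Par-Tetrahedron-eq} are compositions of four factors $T^{ijk}_{\cdots}$ each, all the work is in rewriting these compositions as a single ordered matrix product in two different ways.

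First I would record the elementary move. Embedding the $3\times 3$ relation \eqref{Lax-Tetra} into the $4\times 4$ extensions \eqref{Lij4}: for each of the four maps appearing in \eqref{Par-Tetrahedron-eq} the three positions it acts on share a common triple of indices (for instance $T^{356}$ touches the positions $23,24,34$, all lying in the block on rows and columns $2,3,4$), so the idle fourth index makes \eqref{Lax-Tetra} hold verbatim for the corresponding $L^4_{ij}$. Thus each application of a factor $T^{ijk}_{\cdots}$ rewrites a triple of adjacent matrices from reverse to forward position order while replacing their arguments by their image under the map. The second ingredient is the commutation rule $L^4_{ij}L^4_{kl}=L^4_{kl}L^4_{ij}$ whenever $\{i,j\}\cap\{k,l\}=\emptyset$, i.e. for the three disjoint pairs $\{12,34\}$, $\{13,24\}$, $\{14,23\}$.

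The core of the argument is the reduction. Starting from the single ordered product of the six matrices $L^4_{ij}$ carrying the input variables, I would apply the four moves of the left-hand side of \eqref{Par-Tetrahedron-eq} in turn: before each move the relevant triple is made adjacent using the commutation rule, and then the embedded local Yang--Baxter identity is applied. Carrying this out transforms the initial product into the ordered product carrying the left-hand side output. Repeating the same procedure for the right-hand side of \eqref{Par-Tetrahedron-eq} transforms the \emph{same} initial product into the ordered product carrying the right-hand side output; after a final rearrangement by the commutation rule the two resulting products are displayed in the same order and are therefore equal, which is precisely an instance of equation \eqref{6-fac}.

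The conclusion is then immediate: the hypothesis that \eqref{6-fac} admits only the trivial solution forces the two outputs to agree componentwise, so the two compositions in \eqref{Par-Tetrahedron-eq} coincide on the arbitrary point $(x,y,z,r,s,t)$, proving that the map \eqref{Par-Tetrahedron_map} is a parametric tetrahedron map. I expect the main obstacle to be the bookkeeping in the reduction step: one must check that at every move the three matrices to be combined can indeed be brought together using only the admissible commutations (those among disjoint index pairs), that they then appear in the correct reverse order, and that each such triple genuinely lies in a common $3\times 3$ block so that \eqref{Lax-Tetra} is applicable. This is routine but delicate index tracking rather than a conceptual difficulty; once it is organised, the theorem follows from the uniqueness assumption exactly as in the Yang--Baxter case.
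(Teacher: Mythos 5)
A preliminary remark on the comparison: the paper itself contains no proof of Theorem \ref{six-factorisation} --- it is quoted from \cite{Sokor-2022} --- so your proposal can only be measured against the argument of that source, and in substance it reproduces it faithfully. All the structural ingredients you list are correct: the identification of the six positions of $\mathcal{X}^6$ with the pairs $12,13,23,14,24,34$; the observation that each factor $T^{ijk}$ of \eqref{Par-Tetrahedron-eq} involves exactly the three pairs contained in one $3$-subset of $\{1,2,3,4\}$, so that \eqref{Lax-Tetra} holds verbatim for the corresponding extensions \eqref{Lij4}; the commutation rule for disjoint pairs; and the strategy of rewriting one common initial product along both sides of \eqref{Par-Tetrahedron-eq} and then invoking the uniqueness hypothesis. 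Your bookkeeping concern is also resolvable exactly as you anticipate: taking the initial product in the order of the right-hand side of \eqref{6-fac}, at each of the four moves (for either composition) the relevant triple can be made adjacent, in descending order, using only the three admissible commutations.

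There is, however, one genuine subtlety that your last step passes over. The local Yang--Baxter relation is usable as a rewriting rule in only one direction: a \emph{descending} adjacent triple ${\rm L}^4_{jk}{\rm L}^4_{ik}{\rm L}^4_{ij}$ with free arguments is replaced by the \emph{ascending} triple ${\rm L}^4_{ij}{\rm L}^4_{ik}{\rm L}^4_{jk}$ carrying the image under $T$; the opposite direction would require inverting $T$, which is not assumed. Consequently, starting from ${\rm L}^4_{34}(t){\rm L}^4_{24}(s){\rm L}^4_{14}(r){\rm L}^4_{23}(z){\rm L}^4_{13}(y){\rm L}^4_{12}(x)$, both of your rewriting sequences necessarily terminate in \emph{ascending} products ${\rm L}^4_{12}(\cdot){\rm L}^4_{13}(\cdot){\rm L}^4_{23}(\cdot){\rm L}^4_{14}(\cdot){\rm L}^4_{24}(\cdot){\rm L}^4_{34}(\cdot)$ carrying the outputs of the two sides of \eqref{Par-Tetrahedron-eq}. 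The identity you obtain is therefore a six-factorisation equation in the order \emph{reversed} relative to \eqref{6-fac} as printed, and the admissible commutations cannot bridge this: an allowed interchange only ever swaps factors with disjoint index pairs, so the relative order of, say, ${\rm L}^4_{12}$ and ${\rm L}^4_{13}$ is invariant under all admissible rearrangements, and it differs between the two orderings. Nor can one pass between them by transposition in the noncommutative setting, since $({\rm A}{\rm B})^T\neq{\rm B}^T{\rm A}^T$ for matrices over a noncommutative ring. This mismatch really resides in the formulation of \eqref{6-fac} (the uniqueness hypothesis ought to be imposed on the ordering in which the rewriting terminates), but a complete proof of the theorem \emph{exactly as stated} must either restate \eqref{6-fac} in that terminal order or supply an additional argument connecting the two factorisation problems; you should make this explicit rather than assert that the derived identity ``is precisely an instance of equation \eqref{6-fac}''.
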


\section{Boussinesq type 2-simplex maps}
Consider the Boussinesq matrix \cite{Tongas-Nijhoff}:
\begin{equation}\label{Lax-BSQ}
{\rm L}(p,q,q_{10},r_{10},a):=
\left(
\begin{matrix}
 -q_{10} & 1 & 0 \\
 -r_{10} & 0 & 1 \\
a-p q_{10}-q r_{10}-\lambda & p & q
\end{matrix}\right),
\end{equation}
which is the Lax matrix of the lattice Boussinesq system. Here, $p$, $q$ and $r$ are the Boussinesq potentials which are functions of two discrete variables $n,m\in\mathbb{Z}$. That is, using the notation introduced in section \ref{notation}, $p_{10}=p(n+1,m)$, $q_{10}=q(n+1,m)$ and $r_{10}=r(n+1,m)$.

We shall construct commutative and noncommutative Yang--Baxter maps $Y:\mathcal{X}\rightarrow\mathcal{X}$ generated by the Boussinesq Lax matrix \eqref{Lax-BSQ}.

\subsection{Commutative Boussinesq type Yang--Baxter map}\label{Sec-BSQ}
Let $\mathcal{X}=\mathbb{C}$. We replace the elements of \eqref{Lax-BSQ} by variables $x_i\in\mathcal{X}$, $i=1,2,3,4$, namely:
$$
(p,q,q_{10},r_{10})\rightarrow (\bm{x}_1, \bm{x}_2, \bm{x}_3, \bm{x}_4),
$$
that is, we consider the  matrix:
\begin{equation}\label{C-Lax-BSQ}
{\rm L}(\bm{x}_1, \bm{x}_2, \bm{x}_3, \bm{x}_4,a):=
\left(
\begin{matrix}
 -\bm{x}_3 & 1 & 0 \\
 -\bm{x}_4 & 0 & 1 \\
a-\bm{x}_1 \bm{x}_3-\bm{x}_2 \bm{x}_4-\lambda & \bm{x}_1 & \bm{x}_2
\end{matrix}\right).
\end{equation}

Consider the matrix refactorisation problem:
\begin{equation}
    {\rm L}(u_1, u_2, u_3, u_4,a){\rm L}(v_1, v_2,v_3, v_4,b)={\rm L}(y_1, y_2, y_3, y_4,b){\rm L}(x_1, x_2, x_3, x_4,a).
\end{equation}
This equation is equivalent to a system of polynomial equations  
$$
P_i(u_1, u_2, u_3, u_4,v_1, v_2,v_3, v_4,x_1, x_2, x_3, x_4,y_1, y_2, y_3, y_4;a,b)=0,\quad i=1,\ldots 7,
$$
which can be solved for $u_1,u_2,u_3,u_4,v_2,v_3$ and $v_4$ in terms of $v_1$, namely it is equivalent to
\begin{minipage}{.5\linewidth}
\begin{align*}
  u_1&=y_1+\frac{a-b}{x_1-y_4+x_2y_3}x_2,\\
	u_2&=y_2+\frac{b-a}{x_1-y_4+x_2y_3},\\
	u_3&=y_3,\\
	u_4&=y_4+v_1-x_1,
\end{align*}
\end{minipage}
\hspace{-.9 cm}
\begin{minipage}{.5\linewidth}
\begin{align}
  v_2&=x_2,\nonumber\\
	v_3&=x_3+\frac{b-a}{x_1-y_4+x_2y_3},\label{correspondence-BSQ}\\
	v_4&=x_4+\frac{b-a}{x_1-y_4+x_2y_3}y_3.\nonumber
\end{align}
\end{minipage}\\

In order for the above system to define a map, one needs to supplement it with one more equation for $v_1$. In \cite{Sokor-2020}, we constructed a Boussinesq type Yang--Baxter map (see Proposition 3.2.1 in \cite{Sokor-2020})  for the choice $v_1=x_1$. Here, we construct a new solution to the Yang--Baxter equation. In particular, we have the following.

\begin{proposition}
    Map $Y_1:\mathbb{C}^8\rightarrow \mathbb{C}^8$, $i=1,2$, given by
    
 \begin{minipage}{.5\linewidth}
\begin{align*}
 x_1\mapsto u_1&=y_1+\frac{a-b}{x_1+x_2y_3-y_4}x_2,\\
 x_2\mapsto u_2&=y_2-\frac{a-b}{x_1+x_2y_3-y_4},\\
 x_3\mapsto u_3&=y_3,\\
 x_4\mapsto u_4&=y_4+\frac{a-b}{x_1+x_2y_3-y_4}x_2,
\end{align*}
\end{minipage}
\hspace{-.9 cm}
\begin{minipage}{.5\linewidth}
\begin{align}\label{YB-2-BSQ}
y_1\mapsto v_1&=x_1-\frac{a-b}{x_1+x_2y_3-y_4}x_2, \nonumber\\
y_2\mapsto v_2&=x_2,\nonumber\\
y_3\mapsto v_3&=x_3-\frac{a-b}{x_1+x_2y_3-y_4},\\
y_4\mapsto v_4&=x_4-\frac{a-b}{x_1+x_2y_3-y_4}y_3,\nonumber
\end{align}
\end{minipage}
\vspace{.2cm}\\ 
is an eight-dimensional parametric Yang--Baxter map, and it admits the following functionally independent first integrals
\begin{subequations}\label{BSQ-ints}
\begin{align}
I_1&=x_1+y_1,\label{BSQ-ints-a}\\
I_2&=x_2+y_2-x_3-y_3,\label{BSQ-ints-b}\\
I_3&=x_2y_2+x_3y_3-x_4-y_4,\label{BSQ-ints-c}\\
I_4&=b(x_2-x_3)-a(y_3-y_2)+(x_4-x_3y_2-y_1)(x_1+x_2y_3-y_4).\label{BSQ-ints-d}
\end{align}
\end{subequations}
\end{proposition}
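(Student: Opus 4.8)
The statement has two independent parts, and I would treat them in turn: that $Y_1$ solves the parametric Yang--Baxter equation \eqref{pybeq}, and that $I_1,\dots,I_4$ are functionally independent invariants.

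For the Yang--Baxter property I would avoid any direct substitution into \eqref{pybeq} and instead exploit the Lax structure underlying the construction. The correspondence \eqref{correspondence-BSQ} parametrises \emph{every} solution $(u,v)$ of the two-matrix refactorisation \eqref{eq-Lax} for the Boussinesq matrix \eqref{C-Lax-BSQ} by the single free variable $v_1$; since \eqref{YB-2-BSQ} is recovered from \eqref{correspondence-BSQ} by one particular choice of $v_1$, the map automatically obeys \eqref{eq-Lax}, i.e. \eqref{C-Lax-BSQ} is a Lax matrix for $Y_1$. Theorem \ref{KP} then reduces the entire problem to one uniqueness statement: the matrix trifactorisation problem \eqref{trifac} for \eqref{C-Lax-BSQ}, with pairwise distinct $a,b,c$, must force $u=x$, $v=y$ and $w=z$ in $\mathbb{C}^4$.

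To prove this uniqueness I would expand both triple products in \eqref{trifac}. Writing ${\rm L}(x,a,\lambda)={\rm M}(x,a)-\lambda {\rm E}$ with ${\rm E}$ the matrix unit in the $(3,1)$ slot, one checks ${\rm E}^2=0$ and ${\rm E}\,{\rm M}\,{\rm E}=0$ (the $(1,3)$ entry of \eqref{C-Lax-BSQ} vanishes), so that each side of \eqref{trifac} is in fact \emph{linear} in $\lambda$. Hence \eqref{trifac} splits into a $\lambda$-independent identity ${\rm M}(u,a){\rm M}(v,b){\rm M}(w,c)={\rm M}(z,c){\rm M}(y,b){\rm M}(x,a)$ together with its $\lambda$-linear counterpart involving ${\rm E}$. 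The $\lambda$-linear equations, which isolate distinguished rows and columns through ${\rm E}$, yield the simplest relations and immediately fix several components; propagating these through the $\lambda$-independent block by elimination then collapses the system to $u=x$, $v=y$, $w=z$. I expect this elimination, together with the bookkeeping of the genericity condition $x_1+x_2y_3-y_4\neq0$ inherited from the denominators in \eqref{YB-2-BSQ}, to be the main obstacle; structurally it is the Kouloukas--Papageorgiou scheme already applied to the related map in \cite{Sokor-2020}, and in practice the final elimination is carried out with a symbolic algebra package.

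For the invariants, $I_k\circ Y_1=I_k$ for $k=1,\dots,4$ is verified by substituting \eqref{YB-2-BSQ} and simplifying: \eqref{BSQ-ints-a} is immediate from $u_1+v_1=x_1+y_1$, the terms proportional to $(a-b)/(x_1+x_2y_3-y_4)$ cancel in pairs for \eqref{BSQ-ints-b}--\eqref{BSQ-ints-c}, and \eqref{BSQ-ints-d} requires a longer but routine cancellation after clearing the common denominator. Functional independence then follows by exhibiting a nonvanishing $4\times4$ minor of the Jacobian $\partial(I_1,I_2,I_3,I_4)/\partial(x_1,\dots,x_4,y_1,\dots,y_4)$: taking the columns indexed by $x_1,x_2,x_4,y_4$, the determinant reduces to $(x_4-x_3y_2-y_1)+(x_1+x_2y_3-y_4)$, which is generically nonzero, so the four integrals are independent on a dense open subset of $\mathbb{C}^8$.
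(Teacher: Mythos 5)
Your treatment of the Yang--Baxter property contains a genuine gap: the uniqueness hypothesis of Theorem \ref{KP} is provably \emph{false} for the Boussinesq matrix \eqref{C-Lax-BSQ}, so the reduction you propose cannot be completed by any amount of elimination or symbolic computation. The failure is structural, and it is precisely the reason why the refactorisation problem here defines a \emph{correspondence} rather than a map (which is why the supplementary equation $u_1+v_1=x_1+y_1$ is needed in the first place). Concretely, for every value of $t$ one checks directly that
\begin{equation*}
{\rm L}(y_1,y_2,y_3,\,y_4+t-x_1,\,b)\,{\rm L}(t,x_2,x_3,x_4,\,a)\;=\;{\rm L}(y_1,y_2,y_3,y_4,b)\,{\rm L}(x_1,x_2,x_3,x_4,a)
\end{equation*}
identically in $\lambda$ (the shifts in the two $(3,1)$-entries compensate). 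Thus already a two-factor product with prescribed parameter order does not determine its factors; appending any common third factor ${\rm L}(w,c)$ produces a one-parameter family of distinct triples with equal triple products, so an equation of the form \eqref{trifac} can never force $u=x$, $v=y$, $w=z$. Your observation that ${\rm L}={\rm M}-\lambda{\rm E}$ with ${\rm E}^2={\rm E}{\rm M}{\rm E}=0$ makes both sides of \eqref{trifac} linear in $\lambda$ is correct, but it does not address this obstruction. The paper avoids it in two different ways: for the commutative map \eqref{YB-2-BSQ} of this proposition it verifies \eqref{pybeq} by straightforward substitution (feasible over $\mathbb{C}$), while for the noncommutative analogue \eqref{NC-BSQ-map} it does \emph{not} invoke the unrestricted trifactorisation but composes both sides of the Yang--Baxter equation explicitly and proves uniqueness only for the \emph{restricted} problem \eqref{trifac-BSQ}, in which the outer factors share the variables $z_3,z_4$ and $x_1,x_2$ with the right-hand side --- exactly the pinning that kills the spurious parameter $t$ above. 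Your argument could be repaired along those lines, but not as stated.

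The part on the integrals is essentially sound and in fact more explicit than the paper's: the paper obtains $I_1,\dots,I_4$ from $\tr\big({\rm L}(y,b){\rm L}(x,a)\big)$ and $\det\big({\rm L}(y,b){\rm L}(x,a)-k\,\mathbb{I}_3\big)$ and merely asserts that the Jacobian has rank $4$, whereas your chosen $4\times4$ minor does reduce to $(x_4-x_3y_2-y_1)+(x_1+x_2y_3-y_4)$, which is not identically zero. One caveat: the ``routine cancellation'' for $I_3$ only goes through with $u_4=y_4-\tfrac{a-b}{x_1+x_2y_3-y_4}x_2$, the value forced by the correspondence \eqref{correspondence-BSQ} together with $u_1+v_1=x_1+y_1$ (on which your Lax-equation claim also relies); with the sign printed in the statement, $I_3$ is not preserved, so the statement's $u_4$ must be read with the opposite sign.
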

\begin{proof}
    If we supplement system \eqref{correspondence-BSQ} with equation 
    $$
    u_1+v_1=y_1+x_1,
    $$
    then the associated augmented system is equivalent to map \eqref{YB-2-BSQ}. It can be shown by straightforward substitution to the Yang--Baxter equation that map \eqref{YB-2-BSQ} is a parametric Yang--Baxter map.

    Integrals $I_1$ and $I_3$ are found from the trace of the monodromy matrix:
    $$
    \tr({\rm L}(y_1, y_2, y_3, y_4,b){\rm L}(x_1, x_2, x_3, x_4,a))=I_1+I_3.
    $$
    Integrals $I_2$ and $I_4$ are found from the determinant $\det({\rm L}(y_1, y_2, y_3, y_4,b){\rm L}(x_1, x_2, x_3, x_4,a)- k\cdot  \mathbb{I}_3)$, where $\mathbb{I}_3$ is the identity matrix $3\times 3$. Now, the rank of matrix $r=\left[\nabla I_1,\nabla I_2,\nabla I_3, \nabla I_4\right]$ is 4, thus $I_i$, $i=1,2,3,4$ are functionally independent.
\end{proof}

\subsection{Nonommutative Boussinesq type Yang--Baxter maps}
Now, let $\mathcal{X}=\mathcal{R}$, that is, a non-commutative division ring. We replace the elements of \eqref{Lax-BSQ} by arbitrary elements of $\mathcal{R}$, $\bm{x}_i\in\mathcal{R}$, $a\in  Z(\mathcal{R})$,
$$
(p,q,q_{10},r_{10})\rightarrow (\bm{x}_1, \bm{x}_2, \bm{x}_3, \bm{x}_4)
$$
namely, we consider the matrix
\begin{equation}\label{NC-Lax-BSQ}
{\rm L}(\bm{x}_1, \bm{x}_2, \bm{x}_3, \bm{x}_4,a):=
\left(
\begin{matrix}
 -\bm{x}_3 & 1 & 0 \\
 -\bm{x}_4 & 0 & 1 \\
a-\bm{x}_1 \bm{x}_3-\bm{x}_2 \bm{x}_4-\lambda & \bm{x}_1 & \bm{x}_2
\end{matrix}\right),
\end{equation}
and substitute it to the following matrix refactorisation problem:
\begin{equation}
    {\rm L}(\bm{u}_1, \bm{u}_2, \bm{u}_3, \bm{u}_4,a){\rm L}(\bm{v}_1, \bm{v}_2, \bm{v}_3, \bm{v}_4,b)={\rm L}(\bm{y}_1, \bm{y}_2, \bm{y}_3, \bm{y}_4,b){\rm L}(\bm{x}_1, \bm{x}_2, \bm{x}_3, \bm{x}_4,a).
\end{equation}
The above equation is equivalent to $\bm{u}_3=\bm{y}_3$, $\bm{v}_2=\bm{x}_2$ the following system of polynomial equations
\begin{align*}
&\bm{y}_3\bm{v}_3-\bm{v}_4=\bm{y}_3\bm{x}_3-\bm{x}_4,\\
&\bm{u}_1+\bm{u}_2\bm{x}_2=\bm{y}_1+\bm{y}_2\bm{x}_2,\\
&\bm{u}_4-\bm{v}_1=\bm{y}_4-\bm{x}_1,\\
&(\bm{u}_4-\bm{v}_1)\bm{v}_3+b-\bm{x}_2\bm{v}_4=(\bm{y}_4-\bm{x}_1)\bm{x}_3+a-\bm{x}_2\bm{x}_4,\\
&\bm{u}_1\bm{v}_4+(a-\bm{u}_1\bm{y}_3-\bm{u}_2\bm{u}_4)\bm{v}_3-\bm{u}_2(b-\bm{v}_1\bm{v}_3-\bm{x}_2 \bm{v}_4)=\\
&\bm{y}_1\bm{x}_4+(b-\bm{y}_1\bm{y}_3-\bm{y}_2\bm{y}_4)\bm{x}_3-\bm{y}_2(a-\bm{x}_1\bm{x}_3-\bm{x}_2 \bm{x}_4),
\end{align*}
for the rest variables $\bm{u}_1, \bm{u}_2, \bm{u}_4, \bm{v}_1, \bm{v}_3$ and $\bm{v}_4$.

This system can be solved for $\bm{u}_1, \bm{u}_2, \bm{u}_3, \bm{u}_4, \bm{v}_2, \bm{v}_3$ and $\bm{v}_4$ in terms of $\bm{v}_1$, i.e. it is equivalent to
\begin{align}\label{NC-BSQ-corr}
&\bm{u}_1=\bm{y}_1-(a-b) (\bm{y}_4-\bm{x}_1-\bm{x}_2\bm{y}_3)^{-1}\bm{x}_2,\nonumber\\
&\bm{u}_2=\bm{y}_2+(a-b)(\bm{y}_4-\bm{x}_1-\bm{x}_2\bm{y}_3)^{-1},\nonumber\\
&\bm{u}_3=\bm{y}_3,\nonumber\\
&\bm{u}_4=\bm{y}_4+\bm{v}_1-\bm{x}_1,\\
&\bm{v}_2=\bm{x}_2,\nonumber\\
&\bm{v}_3=(a-b)(\bm{y}_4-\bm{x}_1-\bm{x}_2\bm{y}_3)^{-1}+\bm{x}_3.\nonumber\\
&\bm{v}_4=\bm{x}_4+(a-b)\bm{y}_3(\bm{y}_4-\bm{x}_1-\bm{x}_2\bm{y}_3)^{-1}.\nonumber
\end{align}
The equations of the above system are not enough to define a map, thus \eqref{NC-BSQ-corr} defines a correspondence between $\mathcal{R}^8$ and $\mathcal{R}^8$. In order to define a map, we need to supplement system \eqref{NC-BSQ-corr} with one equation. In particular, we have the following.

\begin{proposition}(Noncommutative Boussinesq type map)
    The above correspondence defines an eight-dimensional noninvolutive map $Y_2:\mathcal{R}^8\rightarrow \mathcal{R}^8$ given by:
%\begin{align}\label{NC-BSQ-map}
%&\bm{x}_1\mapsto\bm{u}_1=\bm{y}_1-(a-b) (\bm{y}_4-\bm{x}_1-\bm{x}_2\bm{y}_3)^{-1}\bm{x}_2,\nonumber\\
%&\bm{x}_2\mapsto\bm{u}_2=\bm{y}_2+(a-b)(\bm{y}_4-\bm{x}_1-\bm{x}_2\bm{y}_3)^{-1},\nonumber\\
%&\bm{x}_3\mapsto\bm{u}_3=\bm{y}_3,\nonumber\\
%&\bm{x}_4\mapsto\bm{u}_4=\bm{y}_4,\\
%&\bm{y}_1\mapsto\bm{v}_1=\bm{x}_1,\nonumber\\
%&\bm{y}_2\mapsto\bm{v}_2=\bm{x}_2,\nonumber\\
%&\bm{y}_3\mapsto\bm{v}_3=\bm{x}_3+(a-b)(\bm{y}_4-\bm{x}_1-\bm{x}_2\bm{y}_3)^{-1},\nonumber\\
%&\bm{y}_4\mapsto\bm{v}_4=\bm{x}_4+(a-b)\bm{y}_3(\bm{y}_4-\bm{x}_1-\bm{x}_2\bm{y}_3)^{-1},\nonumber
%\end{align}

 \begin{minipage}{.5\linewidth}
\begin{align*}
&\bm{x}_1\mapsto\bm{u}_1=\bm{y}_1-(a-b) (\bm{y}_4-\bm{x}_1-\bm{x}_2\bm{y}_3)^{-1}\bm{x}_2,\\
&\bm{x}_2\mapsto\bm{u}_2=\bm{y}_2+(a-b)(\bm{y}_4-\bm{x}_1-\bm{x}_2\bm{y}_3)^{-1},\\
&\bm{x}_3\mapsto\bm{u}_3=\bm{y}_3,\\
&\bm{x}_4\mapsto\bm{u}_4=\bm{y}_4,
\end{align*}
\end{minipage}
\hspace{-.9 cm}
\begin{minipage}{.5\linewidth}
\begin{align}
&\bm{y}_2\mapsto\bm{v}_1=\bm{x}_1,\nonumber\\
&\bm{y}_2\mapsto\bm{v}_2=\bm{x}_2,\label{NC-BSQ-map}\\
&\bm{y}_3\mapsto\bm{v}_3=\bm{x}_3+(a-b)(\bm{y}_4-\bm{x}_1-\bm{x}_2\bm{y}_3)^{-1},\nonumber\\
&\bm{y}_4\mapsto\bm{v}_4=\bm{x}_4+(a-b)\bm{y}_3(\bm{y}_4-\bm{x}_1-\bm{x}_2\bm{y}_3)^{-1},\nonumber
\end{align}
\end{minipage}
\vspace{.2cm}\\ 
and it admits the following invariant
$$
I={\bm x}_2+{\bm y}_2-{\bm x}_3-{\bm y}_3.
$$
\end{proposition}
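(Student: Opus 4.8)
The plan is to derive the map \eqref{NC-BSQ-map} from the correspondence \eqref{NC-BSQ-corr} by imposing the single supplementary relation $\bm{v}_1=\bm{x}_1$, and then to verify directly that the resulting assignment is a well-defined, noninvolutive map conserving $I$. First I would substitute $\bm{v}_1=\bm{x}_1$ into the fourth line of \eqref{NC-BSQ-corr}, which collapses $\bm{u}_4=\bm{y}_4+\bm{v}_1-\bm{x}_1$ to $\bm{u}_4=\bm{y}_4$; the remaining components of \eqref{NC-BSQ-corr} are already written explicitly in terms of the inputs $\bm{x}_i,\bm{y}_i$ and do not involve $\bm{v}_1$, so reading them off reproduces exactly \eqref{NC-BSQ-map}. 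Since each output is then an explicit rational expression in the inputs, the assignment is a genuine single-valued map wherever $\bm{y}_4-\bm{x}_1-\bm{x}_2\bm{y}_3\neq 0$ (hence invertible, as $\mathcal{R}$ is a division ring), and no further refactorisation analysis is needed for well-definedness.

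For bookkeeping I would abbreviate the recurring factor by $\bm{w}:=(a-b)(\bm{y}_4-\bm{x}_1-\bm{x}_2\bm{y}_3)^{-1}$, noting that $a-b\in Z(\mathcal{R})$ so that the scalar $a-b$ may be moved freely but $\bm{w}$ does not commute with the ring variables. The conservation of $I$ is then immediate: using $\bm{u}_2=\bm{y}_2+\bm{w}$, $\bm{v}_2=\bm{x}_2$, $\bm{u}_3=\bm{y}_3$ and $\bm{v}_3=\bm{x}_3+\bm{w}$, one computes
\begin{align*}
\bm{u}_2+\bm{v}_2-\bm{u}_3-\bm{v}_3
&=(\bm{y}_2+\bm{w})+\bm{x}_2-\bm{y}_3-(\bm{x}_3+\bm{w})\\
&=\bm{x}_2+\bm{y}_2-\bm{x}_3-\bm{y}_3,
\end{align*}
so the two occurrences of $\bm{w}$ cancel and $I$ is invariant irrespective of the noncommutativity.

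For noninvolutivity it suffices to track a single component of the second iterate. A first application of $Y_2$ sends $(\bm{x}_i,\bm{y}_i)$ to $(\bm{u}_i,\bm{v}_i)$; a second application uses the block $(\bm{u}_i)$ in the role of the $\bm{x}$-variables and $(\bm{v}_i)$ in the role of the $\bm{y}$-variables. Because the map enforces $\bm{u}_4=\bm{y}_4$, the fourth component of the first block after two iterations equals the fourth component of the second block after one iteration, namely $\bm{v}_4=\bm{x}_4+(a-b)\bm{y}_3(\bm{y}_4-\bm{x}_1-\bm{x}_2\bm{y}_3)^{-1}=\bm{x}_4+\bm{y}_3\bm{w}$. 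Since $\bm{y}_3\bm{w}$ is not identically zero on $\mathcal{R}^8$, the second iterate does not recover $\bm{x}_4$, whence $Y_2\circ Y_2\neq\id$ and the map is noninvolutive.

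I expect no serious obstacle here: every verification reduces to a direct substitution. The only point demanding care is the noncommutative ordering in the terms containing $\bm{w}$, where $a-b$ must be treated as a central scalar while the ring-valued factors retain their left/right placement exactly as written in \eqref{NC-BSQ-map}; it is precisely this ordering, together with the asymmetric choice $\bm{v}_1=\bm{x}_1$ of supplementary relation (in contrast to the symmetric $u_1+v_1=y_1+x_1$ used for \eqref{YB-2-BSQ}), that produces the surviving summand $\bm{y}_3\bm{w}$ responsible for breaking involutivity while still collapsing the correspondence to a single-valued map.
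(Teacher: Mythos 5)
Your proposal is correct and follows essentially the same route as the paper's proof: supplement the correspondence \eqref{NC-BSQ-corr} with the single relation $\bm{v}_1=\bm{x}_1$ (which collapses $\bm{u}_4=\bm{y}_4+\bm{v}_1-\bm{x}_1$ to $\bm{u}_4=\bm{y}_4$), verify $I\circ Y_2=I$ by direct cancellation of the common term, and establish noninvolutivity by observing that the fourth first-block component of $Y_2\circ Y_2$ equals $\bm{v}_4=\bm{x}_4+(a-b)\bm{y}_3(\bm{y}_4-\bm{x}_1-\bm{x}_2\bm{y}_3)^{-1}\neq\bm{x}_4$. The only discrepancy is in the paper's favor of your reading: its proof writes the supplementary equation as $\bm{u}_1=\bm{x}_1$, evidently a typo, since $\bm{u}_1$ is already determined by the correspondence and only the choice $\bm{v}_1=\bm{x}_1$ actually produces \eqref{NC-BSQ-map}.
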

\begin{proof}
     If we supplement system \eqref{NC-BSQ-corr} with equation $\bm{u}_1=\bm{x}_1$, then the associated augmented system is equivalent to map \eqref{NC-BSQ-map}. Moreover, $I\circ Y_2= I$, i.e. $I$ is a first integral. Finally, we have that $\bm{u}_4\circ Y_2 \stackrel{\eqref{NC-BSQ-map}}{=}\bm{x}_4+(a-b)\bm{y}_3(\bm{y}_4-\bm{x}_1-\bm{x}_2\bm{y}_3)^{-1}\neq \bm{x}_4$, therefore map \eqref{NC-BSQ-map} is noninvolutive.
\end{proof}

For the Yang--Baxter property, we have the following.

\begin{theorem}
    Map \eqref{NC-BSQ-map} is a Yang--Baxter map.
\end{theorem}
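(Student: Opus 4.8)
The plan is to apply Theorem~\ref{KP} (Kouloukas--Papageorgiou), which reduces the Yang--Baxter property to verifying that the matrix trifactorisation problem has only the trivial solution. Since the map \eqref{NC-BSQ-map} was constructed precisely so that the Lax equation \eqref{eq-Lax} holds by construction (the correspondence \eqref{NC-BSQ-corr} is equivalent to the refactorisation of a product of two Lax matrices \eqref{NC-Lax-BSQ}, supplemented by $\bm{u}_1=\bm{x}_1$), the map already comes equipped with the required Lax representation. The substance of the proof is therefore the trifactorisation step, and I must keep in mind that Theorem~\ref{KP} as literally stated is for the commutative/parametric setting; I would either invoke its natural noncommutative analogue (the same argument works verbatim once the scalar fractions are read as the noncommutative expressions appearing in \eqref{NC-BSQ-corr}) or simply reprove the reduction in the noncommutative ring $\mathfrak{R}$.

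First I would write out the matrix trifactorisation problem \eqref{trifac} for the Boussinesq Lax matrix \eqref{NC-Lax-BSQ}, with the three pairs of parameters $(a,b,c)$ and the two triples of variables $(\bm{u}_i,\bm{v}_i,\bm{w}_i)$ on the left and $(\bm{z}_i,\bm{y}_i,\bm{x}_i)$ on the right. Expanding the $3\times 3$ products on each side yields a system of polynomial identities in $\mathfrak{R}$; my goal is to show that these force $\bm{u}_i=\bm{x}_i$, $\bm{v}_i=\bm{y}_i$, $\bm{w}_i=\bm{z}_i$ for $i=1,2,3,4$. I would extract the relations entry by entry, beginning with the structurally simplest ones: the $(1,2)$, $(1,3)$, $(2,2)$, and $(2,3)$ entries of \eqref{NC-Lax-BSQ} contain constants $1$ and $0$ and the single variables $\bm{x}_3,\bm{x}_4$, so the corresponding entries of the triple product give the easiest linear relations among the third and fourth coordinates. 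These should pin down the $\bm{u}_3,\bm{w}_3$-type variables directly (mirroring how $\bm{u}_3=\bm{y}_3$ and $\bm{v}_2=\bm{x}_2$ dropped out immediately in the two-matrix case), after which the remaining entries can be solved in cascade.

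The key is that the independence of $\lambda$ splits each entry identity into a $\lambda$-part and a $\lambda$-free part, doubling the number of usable equations: matching the coefficients of $\lambda$ in the $(3,j)$ entries isolates the combinations $\bm{x}_3,\bm{x}_4$ and then, at the next order, the products $\bm{x}_1\bm{x}_3+\bm{x}_2\bm{x}_4$, which is exactly the leverage needed to separate the four coordinates without ever dividing by a noncommuting quantity on the wrong side. I would process the equations in an order that always presents a variable \emph{linearly} and isolated on one side, so that each step uses only left- or right-multiplication by an already-determined (hence invertible, in the division ring $\mathfrak{R}$) element; this avoids the ordering ambiguities that make noncommutative elimination delicate.

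The main obstacle I anticipate is precisely this noncommutative bookkeeping: in the commutative proof one freely clears denominators and cancels, but here I must verify at every cancellation that the factor being removed sits on the correct side and is a previously-solved expression rather than an unknown, and that the inverses appearing (of the form $(\bm{y}_4-\bm{x}_1-\bm{x}_2\bm{y}_3)^{-1}$) are genuinely well defined in $\mathfrak{R}$. A secondary concern is confirming that the supplementary condition $\bm{u}_1=\bm{x}_1$ used to define the map is compatible with --- and in fact forced by --- the trifactorisation system, so that the reduction to the trivial solution really does certify the full Yang--Baxter equation \eqref{eq_YB} rather than merely the weaker Lax refactorisation \eqref{eq-Lax}. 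Once the trivial-solution implication is established, Theorem~\ref{KP} (in its noncommutative form) delivers the conclusion.
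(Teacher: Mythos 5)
Your overall framing---reduce the Yang--Baxter property to a trifactorisation-uniqueness statement in the spirit of Theorem~\ref{KP}, with the Lax representation holding by construction---is the same as the paper's, and your remarks about working in a noncommutative analogue of Theorem~\ref{KP} are reasonable. However, the precise uniqueness statement you intend to prove is false, and this is a fatal gap. You propose to take the trifactorisation problem with \emph{all twelve} components of $(\bm{u},\bm{v},\bm{w})$ unknown and show it forces $\bm{u}_i=\bm{x}_i$, $\bm{v}_i=\bm{y}_i$, $\bm{w}_i=\bm{z}_i$, $i=1,\dots,4$. This cannot succeed, because the Boussinesq Lax matrix \eqref{NC-Lax-BSQ} carries a one-parameter gauge freedom in every adjacent pair of factors: a direct entry-by-entry computation (valid over $\mathfrak{R}$, no commutativity needed) shows that for every $\bm{t}\in\mathfrak{R}$
\begin{equation*}
{\rm L}(\bm{x}_1,\bm{x}_2,\bm{x}_3,\bm{x}_4+\bm{t},a)\,{\rm L}(\bm{y}_1+\bm{t},\bm{y}_2,\bm{y}_3,\bm{y}_4,b)={\rm L}(\bm{x}_1,\bm{x}_2,\bm{x}_3,\bm{x}_4,a)\,{\rm L}(\bm{y}_1,\bm{y}_2,\bm{y}_3,\bm{y}_4,b).
\end{equation*}
This is exactly the freedom that makes \eqref{NC-BSQ-corr} a correspondence rather than a map (the free variable $\bm{v}_1$ enters only through $\bm{u}_4=\bm{y}_4+\bm{v}_1-\bm{x}_1$), and it is why the supplementary equation $\bm{u}_1=\bm{x}_1$ had to be imposed by hand; contrary to the hope in your last paragraph, that equation is a chosen branch and is not ``forced by'' any refactorisation system. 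Consequently, whatever version of the twelve-unknown trifactorisation problem you write down, as soon as it has one solution the gauge identity (applied to the first--second pair and to the second--third pair of factors) generates a two-parameter family of further solutions; the trivial solution is never implied, and your elimination cascade must stall with two undetermined parameters.

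The paper circumvents precisely this obstruction, and that extra step is what your proposal is missing. Instead of the unrestricted problem, the paper first runs the Lax cascade on both sides of the Yang--Baxter equation and uses the special structure of map \eqref{NC-BSQ-map}---the components $\bm{u}_3=\bm{y}_3$, $\bm{u}_4=\bm{y}_4$, $\bm{v}_1=\bm{x}_1$, $\bm{v}_2=\bm{x}_2$ are copied, not transformed---to conclude that \emph{both} composite refactorisations of ${\rm L}(\bm{z},c){\rm L}(\bm{y},b){\rm L}(\bm{x},a)$ have outer factors with frozen entries, i.e.\ both are of the form ${\rm L}(\ast,\ast,\bm{z}_3,\bm{z}_4,a){\rm L}(\ast,\ast,\ast,\ast,b){\rm L}(\bm{x}_1,\bm{x}_2,\ast,\ast,c)$. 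The uniqueness statement that actually needs proving is therefore the restricted, eight-unknown problem \eqref{trifac-BSQ}; the frozen entries $\bm{z}_3,\bm{z}_4$ and $\bm{x}_1,\bm{x}_2$ are exactly what kills the two gauge parameters above, and the elimination on the resulting system \eqref{eqB1}--\eqref{eqB11} then does yield the trivial solution. To repair your proof, insert this intermediate reduction (restriction of the trifactorisation problem via the invariance properties of the map) before attempting any elimination; the elimination itself can then proceed along the lines you describe.
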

\begin{proof}
We employ matrix $L(\bm{x}_1,\bm{x}_2,\bm{x}_3,\bm{x}_4,a)$ given by \eqref{NC-Lax-BSQ}. Using the left-hand side of the Yang--Baxter equation, and taking into account \eqref{NC-BSQ-map}, we have:
    \begin{align}
        &{\rm L}(\bm{z}_1,\bm{z}_2,\bm{z}_3,\bm{z}_4,c){\rm L}(\bm{y}_1,\bm{y}_2,\bm{y}_3,\bm{y}_4,b)L(\bm{x}_1,\bm{x}_2,\bm{x}_3,\bm{x}_4,a)=\nonumber\\
        &{\rm L}(\bm{z}_1,\bm{z}_2,\bm{z}_3,\bm{z}_4,c){\rm L}(\tilde{\bm{x}}_1,\tilde{\bm{x}}_2,\bm{y}_3,\bm{y}_4,a){\rm L}(\bm{x}_1,\bm{x}_2,\tilde{\bm{y}}_3,\tilde{\bm{y}}_4,b)=\nonumber\\
        &{\rm L}(\tilde{\tilde{\bm{x}}}_1,\tilde{\tilde{\bm{x}}}_2,\bm{z}_3,\bm{z}_4,a){\rm L}(\tilde{\bm{x}}_1,\tilde{\bm{x}}_2,\tilde{\bm{z}}_3,\tilde{\bm{z}}_4,c){\rm L}(\bm{x}_1,\bm{x}_2,\tilde{\bm{y}}_3,\tilde{\bm{y}}_4,b)=\label{LHS-BSQ}\\
        &{\rm L}(\tilde{\tilde{\bm{x}}}_1,\tilde{\tilde{\bm{x}}}_2,\bm{z}_3,\bm{z}_4,a){\rm L}(\tilde{\bm{x}}_1,\tilde{\bm{x}}_2,\tilde{\bm{z}}_3,\tilde{\bm{z}}_4,b){\rm L}(\bm{x}_1,\bm{x}_2,\tilde{\tilde{\bm{z}}}_3,\tilde{\tilde{\bm{z}}}_4,c).\nonumber
    \end{align}
Moreover, using the right-hand side of the Yang--Baxter equation, and taking into account \eqref{NC-BSQ-map}, it follows that  
\begin{align}
        &{\rm L}(\bm{z}_1,\bm{z}_2,\bm{z}_3,\bm{z}_4,c){\rm L}(\bm{y}_1,\bm{y}_2,\bm{y}_3,\bm{y}_4,b){\rm L}(\bm{x}_1,\bm{x}_2,\bm{x}_3,\bm{x}_4,a)=\nonumber\\
        &{\rm L}(\hat{\bm{y}}_1,\hat{\bm{y}}_2,\bm{z}_3,\bm{z}_4,b){\rm L}(\bm{y}_1,\bm{y}_2,\hat{\bm{z}}_3,\hat{\bm{z}}_4,c){\rm L}(\bm{x}_1,\bm{x}_2,\bm{x}_3,\bm{x}_4,a)=\nonumber\\
        &{\rm L}(\hat{\bm{y}}_1,\hat{\bm{y}}_2,\bm{z}_3,\bm{z}_4,b){\rm L}(\hat{\bm{x}}_1,\hat{\bm{x}}_2,\hat{\bm{z}}_3,\hat{\bm{z}}_4,a){\rm L}(\bm{x}_1,\bm{x}_2,\hat{\hat{\bm{z}}}_3,\hat{\hat{\bm{z}}}_4,c)=\label{RHS-BSQ}\\
        &{\rm L}(\hat{\hat{\bm{x}}}_1,\hat{\hat{\bm{x}}}_2,\bm{z}_3,\bm{z}_4,a){\rm L}(\hat{\bm{x}}_1,\hat{\bm{x}}_2,\hat{\bm{z}}_3,\hat{\bm{z}}_4,b){\rm L}(\bm{x}_1,\bm{x}_2,\hat{\hat{\bm{z}}}_3,\hat{\hat{\bm{z}}}_4,c).\nonumber
    \end{align}
Equations \eqref{LHS-BSQ} and \eqref{RHS-BSQ} imply  
\begin{align*}
    &{\rm L}(\tilde{\tilde{\bm{x}}}_1,\tilde{\tilde{\bm{x}}}_2,\bm{z}_3,\bm{z}_4,a){\rm L}(\tilde{\bm{x}}_1,\tilde{\bm{x}}_2,\tilde{\bm{z}}_3,\tilde{\bm{z}}_4,b){\rm L}(\bm{x}_1,\bm{x}_2,\tilde{\tilde{\bm{z}}}_3,\tilde{\tilde{\bm{z}}}_4,c)=\\
    &{\rm L}(\hat{\hat{\bm{x}}}_1,\hat{\hat{\bm{x}}}_2,\bm{z}_3,\bm{z}_4,a){\rm L}(\hat{\bm{x}}_1,\hat{\bm{x}}_2,\hat{\bm{z}}_3,\hat{\bm{z}}_4,b){\rm L}(\bm{x}_1,\bm{x}_2,\hat{\hat{\bm{z}}}_3,\hat{\hat{\bm{z}}}_4,c).
\end{align*}

For simplicity of the notation, we rename variables $\tilde{\tilde{x}}_1=z_1$,~$\tilde{\tilde{x}}_2=z_2$,~ $\tilde{x}_1=y_1$,~ $\tilde{x}_2=y_2$,~ $\tilde{z}_3=y_3$,~ $\tilde{z}_4=y_4$. Then, the above equation can be rewritten as
\begin{align}
    &{\rm L}(\bm{z}_1,\bm{z}_2,\bm{z}_3,\bm{z}_4,a){\rm L}(\bm{y}_1,\bm{y}_2,\bm{y}_3,\bm{y}_4,b){\rm L}(\bm{x}_1,\bm{x}_2,\bm{x}_3,\bm{x}_4,c)=\nonumber\\
    &{\rm L}(\bm{w}_1,\bm{w}_2,\bm{z}_3,\bm{z}_4,a){\rm L}(\bm{v}_1,\bm{v}_2,\bm{v}_3,\bm{v}_4,b){\rm L}(\bm{x}_1,\bm{x}_2,\bm{u}_3,\bm{u}_4,c).\label{trifac-BSQ}
\end{align}
This equation is equivalent to the system:
\begin{subequations}
\begin{align}
       & (-z_3v_3+v_4)u_3+z_3(u_4-x_4)=(-z_3y_3+y_4)x_3+x_1(u_3-x_3)+x_2(u_4-x_4) \label{eqB1}\\
        &z_3v_3-v_4=z_3y_3-y_4\label{eqB2}\\
        &u_3-v_2=x_3-y_2\label{eqB3}\\
        &(-z_4v_3-b+v_1v_3+v_2v_4)u_3-(v_1-z_4)u_4+v_2(a-x_1u_3-x_2u_4)=\nonumber\\
        &(-z_4y_3-b+y_1y_3+y_2y_4)x_3-(y_1-z_4)x_4+y_2(a-x_1x_3-x_2x_4)\label{eqB4}\\
        &z_4v_3-v_1v_3-v_2v_4+v_2x_1=z_4y_3-y_1y_3-y_2y_4+y_2x_1\label{eqB5}\\
        &v_1+v_2x_2=y_1+y_2x_2\label{eqB6}\\
        &-v_3u_3+w_2u_3+u_4-w_1-w_2v_2=-y_3x_3+z_2x_3+x_4-z_1-z_2y_2\label{eqB7}\\
        &(c-w_1z_3-w_2z_4)v_3u_3+w_1v_4u_3-w_2(b-v_1v_3-v_2v_4)u_3-(c-w_1z_3-w_2z_4+w_2v_1)u_4+\nonumber\\
        &(w_1+w_2v_2)(a-x_1u_3-x_2u_4)=(c-z_1z_3-z_2z_4)y_3x_3+z_1y_4x_3-z_2(b-y_1y_3-y_2y_4)x_3-\nonumber\\
        &(c-z_1z_3-z_2z_4+z_2y_1)x_4+(z_1+z_2y_2)(a-x_1x_3-x_2x_4)\label{eqB8}\\
        &v_3-w_2=y_3-z_2\label{eqB9}\\
        &(-c+w_1z_3+w_2z_4)v_3-w_1v_4+w_2(b-v_1v_3-v_2v_4)+(w_1+w_2v_2)x_1=\nonumber\\
        &(-c+z_1z_3+z_2z_4)y_3-z_1y_4+z_2(b-y_1y_3-y_2y_4)+(z_1+z_2y_2)x_1\nonumber\\
        &-w_1z_3-w_2z_4+w_2v_1+(w_1+w_2v_2)x_2=-z_1z_3-z_2z_4+z_2y_1+(z_1+z_2y_2)x_2. \label{eqB11}
\end{align}
\end{subequations}

From \eqref{eqB6}, \eqref{eqB11} and \eqref{eqB9}, it follows that $w_2=z_2$.
From \eqref{eqB9} and \eqref{eqB2}, we obtain $v_3 = y_3$ and $v_4=y_4$, whereas \eqref{eqB6}, \eqref{eqB5} and \eqref{eqB2} imply $v_2=y_2$.
Then, from \eqref{eqB6} and \eqref{eqB3} it follows that $v_1=y_1$, $u_3=x_3$,
 while from \eqref{eqB6}, \eqref{eqB11} and \eqref{eqB9} we obtain $w_2=z_2$.
Now, equations \eqref{eqB9} and \eqref{eqB2} imply $v_3 = y_3$ and $v_4=y_4$, and equations \eqref{eqB6}, \eqref{eqB5} and \eqref{eqB2} imply $v_2=y_2$.
Then, from \eqref{eqB6} and \eqref{eqB3} it follows that $v_1=y_1$, $u_3=x_3$, from
\eqref{eqB1} and \eqref{eqB2} we obtain $u_4=x_4$, and \eqref{eqB6}, \eqref{eqB11} imply $w_1=z_1$.

That is, equation \eqref{trifac-BSQ} implies the trivial solution 
$$u_3 = x_3,\quad u_4 = x_4,\quad v_1 = y_1,\quad v_2 = y_2,\quad v_3 = y_3,\quad v_4 = y_4,\quad w_1 = z_1,\quad w_2 = z_2,$$
therefore, according to Theorem \ref{KP}, map \eqref{NC-BSQ-map} is a Yang--Baxter map.
\end{proof}

\subsection{Squeeze down to the noncommutative Boussinesq lattice equation}
Recall that a quad-graph equation (see Figure \ref{QD-diagram}) is a lattice equation (or system of equations) of the form
\begin{equation}\label{quad-graph}
    \mathcal{Q}(f_{00},f_{10},f_{01},f_{11};a,b)=0,
\end{equation}
where $f=f(n,m)$, $n,m\in\mathbb{Z}$, $f_{ij}=f(n+j,m+j)$, $a,b\in\mathbb{C}$, and $\mathcal{Q}$ is a polynomial affine-linear function, i.e.
$$
\frac{\partial\mathcal{Q}}{\partial f_{ij}}\neq 0,\quad \frac{\partial^2\mathcal{Q}}{\partial f_{ij}^2}=0.
$$

\begin{figure}[ht]
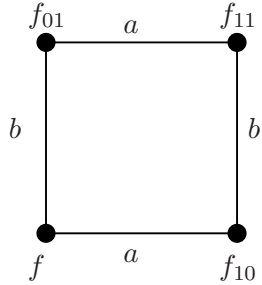

\centertexdraw{ \setunitscale 0.5
\linewd 0.02 \arrowheadtype t:F 
\htext(0 0.5) {\phantom{T}}
\move (-1 -2) \lvec (1 -2) 
\move(-1 -2) \lvec (-1 0) \move(1 -2) \lvec (1 0) \move(-1 0) \lvec(1 0)
\move (1 -2) \fcir f:0.0 r:0.1 \move (-1 -2) \fcir f:0.0 r:0.1
 \move (-1 0) \fcir f:0.0 r:0.1 \move (1 0) \fcir f:0.0 r:0.1  
\htext (-1.2 -2.5) {$f$} \htext (.8 -2.5) {$f_{10}$} \htext (-.2 -2.3) {$a$}
\htext (-1.2 .15) {$f_{01}$} \htext (.8 .15) {$f_{11}$} \htext (-.2 .1) {$a$}
\htext (-1.4 -1) {$b$} \htext (1.1 -1) {$b$}}
\caption{{Quad graph equation.}} \label{bianchi}\label{QD-diagram}
\end{figure}

Equation \eqref{quad-graph} is called integrable if it is equivalent to the following matrix refactorisation problem:
$$
{\rm K}(f_{01},f_{11};a,\lambda){\rm M}(f_{00},f_{01};b,\lambda)={\rm M}(f_{10},f_{11};b,\lambda){\rm K}(f_{00},f_{10};a,\lambda),\quad\text{for all}\quad \lambda\in\mathbb{C},
$$
where ${\rm K}$ and {\rm M} are square matrices. The above equation is called a Lax equation (or representation) of equation \eqref{quad-graph}.

In this section, we aim to construct a noncommutative integrable Boussinesq type lattice system. We shall do that by squeezing Yang--Baxter map \eqref{NC-BSQ-map} down to an integrable lattice equation after a change of variables. 

In order to find the suitable change of variables, we set in \eqref{NC-Lax-BSQ}: $\bm{x}_1=\bm{p}, \bm{x}_2=\bm{q}, \bm{x}_3=\bm{q}_{10}, \bm{x}_4=\bm{r}_{10}$, namely we consider the matrix:
\begin{equation}\label{NC-Lax-BSQ}
{\rm L}(\bm{p}, \bm{q}, \bm{q}_{10}, \bm{r}_{10},a):=
\left(
\begin{matrix}
 -\bm{q}_{10} & 1 & 0 \\
 -\bm{r}_{10} & 0 & 1 \\
a-\bm{p} \bm{q}_{10}-\bm{q} \bm{r}_{10}-\lambda & \bm{p} & \bm{q}
\end{matrix}\right),
\end{equation}
Then, we compare the Lax equation 
\begin{equation}\label{Lax-lattice-BSQ}
    {\rm L}(\bm{p}_{01}, \bm{q}_{01}, \bm{q}_{11}, \bm{r}_{11},a){\rm L}(\bm{p}, \bm{q}, \bm{q}_{01}, \bm{r}_{01},b)={\rm L}(\bm{p}_{10}, \bm{q}_{10}, \bm{q}_{11}, \bm{r}_{11},b){\rm L}(\bm{p}, \bm{q}, \bm{q}_{10}, \bm{r}_{10},a),
\end{equation}
to the matrix refactorisation problem
$$
    {\rm L}(\bm{u}_1, \bm{u}_2, \bm{u}_3, \bm{u}_4,a){\rm L}(\bm{v}_1, \bm{v}_2, \bm{v}_3, \bm{v}_4,b)={\rm L}(\bm{y}_1, \bm{y}_2, \bm{y}_3, \bm{y}_4,b){\rm L}(\bm{x}_1, \bm{x}_2, \bm{x}_3, \bm{x}_4,a),
$$
which generates the Yang--Baxter map \eqref{NC-BSQ-map}. This indicates the suitable change of variables.

We summarise everything in the following. 

\begin{proposition} (Noncommutative Boussinesq lattice equation)
    Yang--Baxter map \eqref{NC-BSQ-map} can be squeezed down to the following noncommutative lattice system
    \begin{subequations}\label{BSQ-quad-G}
\begin{align}
&(\bm{p}+\bm{q}\bm{q}_{11}-\bm{r}_{11})(\bm{p}_{01}-\bm{p}_{10})=(a-b)\bm{q},\label{BSQ-quad-G-a}\\
&(\bm{q}_{01}-\bm{q}_{10})(\bm{p}+\bm{q}\bm{q}_{11}-\bm{r}_{11})=b-a,\label{BSQ-quad-G-b}\\
&(\bm{r}_{01}-\bm{r}_{10})(\bm{p}+\bm{q}\bm{q}_{11}-\bm{r}_{11})=(b-a)\bm{q}_{11}.\label{BSQ-quad-G-c}
\end{align}
\end{subequations}
The above noncommutative Boussinesq type system is integrable with Lax representation:
$$
    {\rm L}(\bm{p}_{01}, \bm{q}_{01}, \bm{q}_{11}, \bm{r}_{11},a){\rm L}(\bm{p}, \bm{q}, \bm{q}_{01}, \bm{r}_{01},b)={\rm L}(\bm{p}_{10}, \bm{q}_{10}, \bm{q}_{11}, \bm{r}_{11},b){\rm L}(\bm{p}, \bm{q}, \bm{q}_{10}, \bm{r}_{10},a),
$$
where ${\rm L}(\bm{p}, \bm{q}, \bm{q}_{10}, \bm{r}_{10},a):=
\left(
\begin{matrix}
 -\bm{q}_{10} & 1 & 0 \\
 -\bm{r}_{10} & 0 & 1 \\
a-\bm{p} \bm{q}_{10}-\bm{q} \bm{r}_{10}-\lambda & \bm{p} & \bm{q}
\end{matrix}\right).$ Finally, system admits the following conservation law:
\begin{equation}\label{consLaws-a}
(\mathcal{T}-1)(\bm{p}_{10}+\bm{q}_{10}\bm{q}-\bm{r})=(\mathcal{S}-1)(\bm{p}_{01}+\bm{q}_{01}\bm{q}-\bm{r}).
\end{equation}
\end{proposition}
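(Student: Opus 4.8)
The plan is to handle the three claims --- the squeeze-down onto \eqref{BSQ-quad-G}, its Lax representation, and the conservation law \eqref{consLaws-a} --- with the first two resting on a single substitution and the last on a short direct computation. Throughout I take $\mathcal{T}$ and $\mathcal{S}$ to be the elementary shifts in $n$ and $m$.

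First I would pin down the change of variables by requiring the matrix refactorisation problem that produces \eqref{NC-BSQ-map} to coincide, factor by factor, with the lattice Lax equation \eqref{Lax-lattice-BSQ}. Matching the four ${\rm L}$-factors forces $\bm{x}_1=\bm{p},\bm{x}_2=\bm{q},\bm{x}_3=\bm{q}_{10},\bm{x}_4=\bm{r}_{10}$, $\bm{y}_1=\bm{p}_{10},\bm{y}_2=\bm{q}_{10},\bm{y}_3=\bm{q}_{11},\bm{y}_4=\bm{r}_{11}$, $\bm{u}_1=\bm{p}_{01},\bm{u}_2=\bm{q}_{01},\bm{u}_3=\bm{q}_{11},\bm{u}_4=\bm{r}_{11}$ and $\bm{v}_1=\bm{p},\bm{v}_2=\bm{q},\bm{v}_3=\bm{q}_{01},\bm{v}_4=\bm{r}_{01}$. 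Under these identifications the four trivial components of \eqref{NC-BSQ-map} ($\bm{u}_3=\bm{y}_3$, $\bm{u}_4=\bm{y}_4$, $\bm{v}_1=\bm{x}_1$, $\bm{v}_2=\bm{x}_2$) become tautologies, while the remaining four reduce to the three relations \eqref{BSQ-quad-G}, the $\bm{u}_2$- and $\bm{v}_3$-components giving the same equation \eqref{BSQ-quad-G-b}. The only subtlety is noncommutativity: setting $W:=\bm{p}+\bm{q}\bm{q}_{11}-\bm{r}_{11}$ one has $\bm{y}_4-\bm{x}_1-\bm{x}_2\bm{y}_3=-W$, so the $\bm{u}_1$-, $\bm{u}_2$- and $\bm{v}_4$-components read $\bm{p}_{01}-\bm{p}_{10}=(a-b)W^{-1}\bm{q}$, $\bm{q}_{01}-\bm{q}_{10}=(b-a)W^{-1}$ and $\bm{r}_{01}-\bm{r}_{10}=(b-a)\bm{q}_{11}W^{-1}$; left- or right-multiplying by $W$ and moving the central scalar $a-b$ through $W^{-1}$ yields \eqref{BSQ-quad-G-a}, \eqref{BSQ-quad-G-b} and \eqref{BSQ-quad-G-c}. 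The Lax representation is then automatic, since by construction \eqref{Lax-lattice-BSQ} is precisely this refactorisation problem; expanding its two sides entrywise returns the same trivial identities together with \eqref{BSQ-quad-G}.

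For \eqref{consLaws-a} I would work on solutions and first isolate the auxiliary relation $\bm{p}_{10}+\bm{q}_{10}\bm{q}=\bm{p}_{01}+\bm{q}_{01}\bm{q}$: from \eqref{BSQ-quad-G-a} one gets $\bm{p}_{10}-\bm{p}_{01}=-(a-b)W^{-1}\bm{q}$ and from \eqref{BSQ-quad-G-b} one gets $(\bm{q}_{10}-\bm{q}_{01})\bm{q}=(a-b)W^{-1}\bm{q}$, and centrality of $a-b$ makes the two cancel. Writing $G:=\bm{p}_{10}+\bm{q}_{10}\bm{q}-\bm{r}$ and $F:=\bm{p}_{01}+\bm{q}_{01}\bm{q}-\bm{r}$, this says $G=F$, so $(\mathcal{T}-1)G-(\mathcal{S}-1)F=(\mathcal{T}G-\mathcal{S}F)-(G-F)$ with $G-F=0$. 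Applying the same auxiliary relation at the shifted base points $(n+1,m)$ and $(n,m+1)$ collapses $\mathcal{T}G$ to $\bm{p}_{11}+\bm{q}_{11}\bm{q}_{10}-\bm{r}_{10}$ and $\mathcal{S}F$ to $\bm{p}_{11}+\bm{q}_{11}\bm{q}_{01}-\bm{r}_{01}$, so their difference is $\bm{q}_{11}(\bm{q}_{10}-\bm{q}_{01})-(\bm{r}_{10}-\bm{r}_{01})$. Finally \eqref{BSQ-quad-G-b} and \eqref{BSQ-quad-G-c} give $\bm{q}_{10}-\bm{q}_{01}=(a-b)W^{-1}$ and $\bm{r}_{10}-\bm{r}_{01}=(a-b)\bm{q}_{11}W^{-1}$, so this difference vanishes and \eqref{consLaws-a} holds.

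The work is bookkeeping rather than conceptual, and the main obstacle is simply to respect the order of multiplication throughout: $W^{-1}$ must be kept on the correct side, the scalars $a-b$ and $b-a$ may be commuted through $W^{-1}$ only because they lie in $Z(\mathfrak{R})$, and the conservation-law computation must invoke the field equations at all three base points $(n,m)$, $(n+1,m)$ and $(n,m+1)$, not merely at $(n,m)$.
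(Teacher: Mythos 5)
Your proposal is correct, and for the two main claims it is essentially the paper's own argument: the same identifications ($\bm{u}_1=\bm{p}_{01}$, $\bm{u}_2=\bm{v}_3=\bm{q}_{01}$, $\bm{u}_3=\bm{y}_3=\bm{q}_{11}$, $\bm{u}_4=\bm{y}_4=\bm{r}_{11}$, $\bm{v}_1=\bm{x}_1=\bm{p}$, $\bm{v}_2=\bm{x}_2=\bm{q}$, $\bm{y}_2=\bm{x}_3=\bm{q}_{10}$, etc.), the same observation that four components become tautologies while the $\bm{u}_2$- and $\bm{v}_3$-components both yield \eqref{BSQ-quad-G-b}, and the same "automatic by construction" treatment of the Lax representation. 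The only place you genuinely diverge is the conservation law. The paper implicitly takes $\mathcal{T}$ to be the $m$-shift and $\mathcal{S}$ the $n$-shift, under which the expanded law is literally the single identity $\bm{q}_{11}\bm{q}_{01}-\bm{r}_{01}-\bm{p}_{10}-\bm{q}_{10}\bm{q}=\bm{q}_{11}\bm{q}_{10}-\bm{r}_{10}-\bm{p}_{01}-\bm{q}_{01}\bm{q}$, obtained by adding the two relations $\bm{p}_{01}-\bm{p}_{10}=(\bm{q}_{10}-\bm{q}_{01})\bm{q}$ and $\bm{r}_{01}-\bm{r}_{10}=\bm{q}_{11}(\bm{q}_{01}-\bm{q}_{10})$ at the single base point $(n,m)$. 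You adopt the opposite convention ($\mathcal{T}$ the $n$-shift, $\mathcal{S}$ the $m$-shift), which forces you to invoke the auxiliary relation additionally at $(n+1,m)$ and $(n,m+1)$ to collapse $\mathcal{T}G$ and $\mathcal{S}F$. Both computations are valid — the two ingredients are the same, and in fact the law holds under either reading, precisely because the $(\bm{p},\bm{q})$-part and the $(\bm{r},\bm{q})$-part vanish separately on solutions — so nothing is lost; but it is worth knowing that the paper's assignment of $\mathcal{T},\mathcal{S}$ (never defined in the text, which is why your reading is defensible) gives the shorter, one-base-point derivation.
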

\begin{proof}
    We set $\bm{u}_1=\bm{p}_{01}$, $\bm{u}_2=\bm{v}_3=\bm{q}_{01}$, $\bm{u}_3=\bm{y}_3=\bm{q}_{11}$, $\bm{u}_4=\bm{y}_4=\bm{r}_{11}$, $\bm{v}_1=\bm{x}_1=\bm{p}$, $\bm{v}_2=\bm{x}_2=\bm{q}$, $\bm{y}_2=\bm{x}_3=\bm{q}_{10}$, $\bm{v}_4=\bm{r}_{01}$, $\bm{x}_4=\bm{r}_{10}$, $\bm{y}_1=\bm{p}_{10}$.

    Then, equation
    $$
    \bm{u}_1=\bm{y}_1-(a-b) (\bm{y}_4-\bm{x}_1-\bm{x}_2\bm{y}_3)^{-1}\bm{x}_2,
    $$
    from map \eqref{NC-BSQ-map} can be rewritten as equation \eqref{BSQ-quad-G-a}. Moreover, equation
    $$
    \bm{u}_2=\bm{y}_2+(a-b)(\bm{y}_4-\bm{x}_1-\bm{x}_2\bm{y}_3)^{-1}
    $$
    takes the form of equation \eqref{BSQ-quad-G-b}. Additionally, equation
    $$
    \bm{v}_4=\bm{x}_4+(a-b)\bm{y}_3(\bm{y}_4-\bm{x}_1-\bm{x}_2\bm{y}_3)^{-1},
    $$
    can be written as \eqref{BSQ-quad-G-b}. Now, the map \eqref{NC-BSQ-map} has the property that if $\bm{x}_3=\bm{y}_2$, then $\bm{u}_2=\bm{v}_3$. That is, equation $\bm{v}_3=\bm{x}_3+(a-b)(\bm{y}_4-\bm{x}_1-\bm{x}_2\bm{y}_3)^{-1}$ also takes the form of equation \eqref{BSQ-quad-G-b}. The rest equations of map \eqref{NC-BSQ-map}, namely $\bm{u}_3=\bm{y}_3,\bm{u}_4=\bm{y}_4,\bm{v}_1=\bm{x}_1$ and $\bm{v}_2=\bm{x}_2$, are identically satisfied in the new variables.

    Now, from \eqref{BSQ-quad-G-a} we obtain that
    \begin{equation}\label{p10p01}
        \bm{p}_{01}-\bm{p}_{10}=(a-b)(\bm{p}+\bm{q}\bm{q}_{11}-\bm{r}_{11})^{-1}\bm{q}\stackrel{\eqref{BSQ-quad-G-b}}{=}(\bm{q}_{10}-\bm{q}_{01})\bm{q},
    \end{equation}
    while from \eqref{BSQ-quad-G-c} it follows that
    \begin{equation}\label{r10r01}
        \bm{r}_{01}-\bm{r}_{10}=\bm{q}_{11}(b-a)(\bm{p}+\bm{q}\bm{q}_{11}-\bm{r}_{11})^{-1}\stackrel{\eqref{BSQ-quad-G-b}}{=}\bm{q}_{11}(\bm{q}_{01}-\bm{q}_{10}).
    \end{equation}
    From equations \eqref{p10p01} and \eqref{r10r01} we obtain
    $$
    \bm{q}_{11}\bm{q}_{01}-\bm{r}_{01}-\bm{p}_{10}-\bm{q}_{10}\bm{q}=\bm{q}_{11}\bm{q}_{10}-\bm{r}_{10}-\bm{p}_{01}-\bm{q}_{01}\bm{q},
    $$
    which is equivalent to the conservation law \eqref{consLaws-a}.
\end{proof}

System is \eqref{BSQ-quad-G} the fully noncommutative version of the Boussinesq lattice equation \cite{Tongas-Nijhoff}. A Grassmann extension of the latter was constructed in \cite{Sokor-2020}.

\section{NLS type 3-simplex map}\label{sec-NLS}
Recall that a matrix $\rm{M}$ is called a Darboux matrix for operator $\mathcal{L}$ if the latter is covariant under the similarity transformation $\rm{M}\mathcal{L}\rm{M}^{-1}$, namely,
$$
\rm{M}\mathcal{L}(u(x,t),a,\lambda)\rm{M}^{-1}=\rm{M}\mathcal{L}(\tilde{u}(x,t),a,\lambda)\rm{M}^{-1},\quad\text{for all}\quad\lambda\in\mathbb{C}.
$$

Let $(\bm{p}(x,t),\bm{q}(x,t))$ and $(\tilde{\bm{p}}(x,t),\tilde{\bm{q}}(x,t))$ are solutions of the noncommutative NLS system:
$$
\begin{cases}
\bm{p}_t=\frac{1}{2}\bm{p}_{xx}+4\bm{p}\bm{q}\bm{p}\\  
\bm{q}_t=-\frac{1}{2}\bm{q}_{xx}-4\bm{q}\bm{p}\bm{q}.
\end{cases}
$$
A Darboux transformation for the above system was constructed in \cite{Sokor-PaulX}, and it reads
\begin{equation}\label{DT2-10}
{\rm M} =\lambda\begin{pmatrix}
1 & 0 \\
0 & 0
\end{pmatrix}
+
\begin{pmatrix}
a+\bm{p}\tilde{\bm{q}} & \bm{p} \\
\tilde{\bm{p}} & 1
\end{pmatrix},
\end{equation}
where $\bm{p}$ and $\bm{q}$ satisfy the system of differential equations:
$$
\bm{p}_x=2\bm{p}_{10}-2a\bm{p}-2\bm{p}\bm{q}_{10}\bm{p},\quad \bm{q}_{10,x}=2a\bm{q}_{10}+2\bm{q}_{10}\bm{p}\bm{q}_{10}-2\bm{p}.
$$

Now, we set $\bm{x}_1=\bm{p}(x,t)$ and $\bm{x}_2=\tilde{\bm{q}}(x,t)$ in \eqref{DT2-10}, namely we consider the matrix
\begin{equation}\label{DT-NLS}
{\rm M} (\bm{x}_1,\bm{x}_2,a,\lambda)=\begin{pmatrix}
a +\lambda +\bm{x}_1\bm{x}_2 & \bm{x}_1 \\
\bm{x}_2 & 1
\end{pmatrix},
\end{equation}
where its elements belong to a division ring $\bm{x}_i\in\mathcal{R}$, $i=1,2$, and $a,\lambda\in Z(\mathcal{R})$.

A noncommutative Adler--Yamilov type 2-simplex map was constructed in \cite{Sokor-Nikitina} and is equivalent to the matrix refactorisation problem
$$
{\rm M} (\bm{u}_1,\bm{u}_2,a,\lambda){\rm M} (\bm{v}_1,\bm{v}_2,b,\lambda)={\rm M} (\bm{y}_1,\bm{y}_2,b,\lambda){\rm K} (\bm{x}_1,\bm{x}_2,a,\lambda).
$$
Here, starting from the same matrix \eqref{DT-NLS}, we shall construct a noncommutative 3-simplex map via the local Yang--Baxter equation.

Let ${\rm K} (\bm{x}_1,\bm{x}_2,a)={\rm M} (\bm{x}_1,\bm{x}_2,a,0)$. We consider its $3\times 3$ extensions, namely the following matrices
\begin{align}\label{Mij-mat}
   &{\rm K}^3_{12}(\bm{x}_1,\bm{x}_2,a)=\begin{pmatrix} 
a +\bm{x}_1\bm{x}_2 & \bm{x}_1 & 0\\ 
\bm{x}_2 &  1 & 0\\
0 & 0 & 1
\end{pmatrix},\nonumber\\
& {\rm K}^3_{13}(\bm{x}_1,\bm{x}_2,a)= \begin{pmatrix} 
a +\bm{x}_1\bm{x}_2 & 0 &\bm{x}_1\\ 
0 & 1 & 0\\
\bm{x}_2 & 0 & 1
\end{pmatrix},  \\
 &{\rm K}^3_{23}(\bm{x}_1,\bm{x}_2,a)=\begin{pmatrix} 
   1 & 0 & 0 \\
0 &a +\bm{x}_1\bm{x}_2 &\bm{x}_1\\ 
0 &\bm{x}_2 & 1
\end{pmatrix},\nonumber
\end{align}
and substitute them to the local Yang--Baxter equation in order to construct a parametric 3-simplex tetrahedron map. In particular, we have the following.

\begin{proposition} (Nonocommutative NLS map)
    Let $\mathcal{R}$ be a nonocommutative division ring and $Z(\mathcal{R})$ its centre. The map $T_{a,b,c}:\mathcal{R}^6\rightarrow\mathcal{R}^6$, $a,b,c\in Z(\mathcal{R})$, given by
   \begin{subequations}\label{NC-Tet-NLS} 
\begin{align}
\bm{x}_1\mapsto \bm{u}_1 &=\frac{b \bm{x}_1-\bm{y}_1\bm{z}_2}{c};\label{NC-Tet-NLS-a}\\
\bm{x}_2\mapsto \bm{u}_2 &=ac(\bm{z}_1\bm{y}_2(a+\bm{x}_1\bm{x}_2)+(c+\bm{z}_1\bm{z}_2)\bm{x}_2)\cdot\nonumber\\
&\quad\left[abc+(\bm{y}_1(c+\bm{z}_2\bm{z}_1)-b\bm{x}_1\bm{z}_1)(\bm{y}_2(a+\bm{x}_1\bm{x}_2)+\bm{z}_2\bm{x}_2)\right]^{-1},\label{NC-Tet-NLS-b}\\
\bm{y}_1\mapsto \bm{v}_1 &=\frac{\bm{y}_1(c+\bm{z}_2\bm{z}_1)-b\bm{x}_1\bm{z}_1}{ac};\label{NC-Tet-NLS-c}\\
\bm{y}_2\mapsto \bm{v}_2 &=\bm{z}_2\bm{x}_2+\bm{y}_2(a+\bm{x}_1\bm{x}_2);\label{NC-Tet-NLS-d}\\
\bm{z}_1\mapsto \bm{w}_1 &=\bm{z}_1-(\bm{z}_1\bm{y}_2 (a+\bm{x}_1\bm{x}_2)+(c+\bm{z}_1\bm{z}_2)\bm{x}_2)\cdot\nonumber\\
&\quad \left[\bm{y}_2 (a+\bm{x}_1\bm{x}_2)+\bm{z}_2\bm{x}_2+abc(\bm{y}_1(c+\bm{z}_2\bm{z}_1)-b\bm{x}_1\bm{z}_1)^{-1}\right]^{-1};\label{NC-Tet-NLS-e}\\
\bm{z}_2\mapsto \bm{w}_2 &=\bm{z}_2+\bm{y}_2\bm{x}_1,\label{NC-Tet-NLS-f}
\end{align}
\end{subequations}
is a noninvolutive map and it is equivalent with the local Yang--Baxter equation
\begin{equation}\label{local-YB-NLS}
    {\rm K}^3_{12}(\bm{u}_1,\bm{u}_2,a){\rm K}^3_{13}(\bm{v}_1,\bm{v}_2,b){\rm K}^3_{23}(\bm{w}_1,\bm{w}_2,c)={\rm K}^3_{23}(\bm{z}_1,\bm{z}_2,c){\rm K}^3_{13}(\bm{y}_1,\bm{y}_2,b){\rm K}^3_{12}(\bm{x}_1,\bm{x}_2,a),
\end{equation}
where ${\rm K}^3_{ij}$ are $3\times 3$ extensions of matrix
\begin{equation}\label{Lax-NLS}
{\rm K} (\bm{x}_1,\bm{x}_2,a)=\begin{pmatrix}
a +\bm{x}_1\bm{x}_2 & \bm{x}_1 \\
\bm{x}_2 & 1
\end{pmatrix}.
\end{equation}
\end{proposition}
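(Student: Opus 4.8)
The plan is to turn the local Yang--Baxter equation \eqref{local-YB-NLS} into an explicit system over $\mathcal{R}$ and solve it. First I would expand each side using the extensions \eqref{Mij-mat}: computing ${\rm K}^3_{13}(\bm{y}_1,\bm{y}_2,b){\rm K}^3_{12}(\bm{x}_1,\bm{x}_2,a)$ and then left-multiplying by ${\rm K}^3_{23}(\bm{z}_1,\bm{z}_2,c)$ gives the right-hand side as a $3\times 3$ matrix with entries in $\mathcal{R}$, and likewise ${\rm K}^3_{12}(\bm{u}_1,\bm{u}_2,a){\rm K}^3_{13}(\bm{v}_1,\bm{v}_2,b){\rm K}^3_{23}(\bm{w}_1,\bm{w}_2,c)$ gives the left-hand side. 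Equating the nine entries produces a system of eight nontrivial noncommutative equations (the $(3,3)$ entry is the identity $1=1$). Throughout, the only structural facts I would use are that $a,b,c\in Z(\mathcal{R})$ are central and that nonzero elements are invertible; the order of all noncommuting factors must be preserved.

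The system then unwinds in a fixed order. The $(3,1)$ and $(3,2)$ entries read off $\bm{v}_2=\bm{z}_2\bm{x}_2+\bm{y}_2(a+\bm{x}_1\bm{x}_2)$ and $\bm{w}_2=\bm{z}_2+\bm{y}_2\bm{x}_1$, which are \eqref{NC-Tet-NLS-d} and \eqref{NC-Tet-NLS-f}. Subtracting $\bm{u}_1$ times the $(2,3)$ entry from the $(1,3)$ entry cancels the $\bm{u}_1\bm{u}_2\bm{v}_1$ terms and leaves $a\bm{v}_1+\bm{u}_1\bm{z}_1=\bm{y}_1$. Substituting $\bm{w}_1=\bm{z}_1-\bm{u}_2\bm{v}_1$ (from the $(2,3)$ entry), the value of $\bm{w}_2$, and this last relation into the $(1,2)$ entry collapses it to $\bm{u}_1 c+\bm{y}_1\bm{w}_2=(b+\bm{y}_1\bm{y}_2)\bm{x}_1$, whence $\bm{u}_1=c^{-1}(b\bm{x}_1-\bm{y}_1\bm{z}_2)$, i.e. \eqref{NC-Tet-NLS-a}. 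Feeding this back into $a\bm{v}_1=\bm{y}_1-\bm{u}_1\bm{z}_1$ yields $\bm{v}_1=(ac)^{-1}(\bm{y}_1(c+\bm{z}_2\bm{z}_1)-b\bm{x}_1\bm{z}_1)$, which is \eqref{NC-Tet-NLS-c}. The $(2,1)$ entry is $\bm{u}_2(b+\bm{v}_1\bm{v}_2)=\bm{z}_1\bm{y}_2(a+\bm{x}_1\bm{x}_2)+(c+\bm{z}_1\bm{z}_2)\bm{x}_2$; since $b+\bm{v}_1\bm{v}_2=(ac)^{-1}(abc+(\bm{y}_1(c+\bm{z}_2\bm{z}_1)-b\bm{x}_1\bm{z}_1)\bm{v}_2)$ is invertible, right-multiplying by its inverse and moving the central factor $ac$ to the front gives exactly \eqref{NC-Tet-NLS-b}. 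Finally $\bm{w}_1=\bm{z}_1-\bm{u}_2\bm{v}_1$, and the elementary identity $(abc+M\bm{v}_2)^{-1}M=(\bm{v}_2+abc\,M^{-1})^{-1}$, with $M=\bm{y}_1(c+\bm{z}_2\bm{z}_1)-b\bm{x}_1\bm{z}_1$, recasts $\bm{u}_2\bm{v}_1$ into the form of \eqref{NC-Tet-NLS-e}.

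Because each of these steps is forced, the derivation shows both that the map is the unique solution of the subsystem used and, once the remaining two entries are checked, that it is genuinely equivalent to the full equation \eqref{local-YB-NLS}. The $(2,2)$ entry, after inserting $\bm{w}_2$ and the $(2,3)$ relation, reduces to $(\bm{u}_2\bm{v}_1+\bm{w}_1)\bm{w}_2=\bm{z}_1\bm{w}_2$ and so is an automatic consequence of $\bm{u}_2\bm{v}_1+\bm{w}_1=\bm{z}_1$. The only substantial check is the $(1,1)$ entry $(a+\bm{u}_1\bm{u}_2)(b+\bm{v}_1\bm{v}_2)=(b+\bm{y}_1\bm{y}_2)(a+\bm{x}_1\bm{x}_2)$: using $\bm{u}_2(b+\bm{v}_1\bm{v}_2)=N$ with $N=\bm{z}_1\bm{y}_2(a+\bm{x}_1\bm{x}_2)+(c+\bm{z}_1\bm{z}_2)\bm{x}_2$ it reduces to the single noncommutative polynomial identity $M\bm{v}_2+(b\bm{x}_1-\bm{y}_1\bm{z}_2)N=c(b\bm{x}_1\bm{x}_2+a\bm{y}_1\bm{y}_2+\bm{y}_1\bm{y}_2\bm{x}_1\bm{x}_2)$, which I would confirm by expanding both products and matching monomials with the order of factors tracked carefully; this expansion is the main obstacle, being the one place where the bookkeeping is not a one-line cancellation. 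Noninvolutivity is then immediate: composing $T_{a,b,c}$ with itself and exhibiting a single component of $T_{a,b,c}\circ T_{a,b,c}$ that does not return the corresponding input shows $T_{a,b,c}\circ T_{a,b,c}\neq\id$.
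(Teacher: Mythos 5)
Your proposal is correct and takes essentially the same route as the paper: expanding \eqref{local-YB-NLS} into the eight nontrivial entry-equations and solving them in the same order ($\bm{v}_2$, $\bm{w}_2$ read off from the third row, $a\bm{v}_1+\bm{u}_1\bm{z}_1=\bm{y}_1$ obtained by combining the $(1,3)$ and $(2,3)$ entries, then $\bm{u}_1$, $\bm{v}_1$, $\bm{u}_2$, $\bm{w}_1$ in turn, the last via the same resolvent-type identity for $\bm{u}_2\bm{v}_1$). If anything you are more complete than the paper on the equivalence claim, since you also verify that the remaining $(2,2)$ and $(1,1)$ entries are automatically satisfied (your polynomial identity does hold); just note that for noninvolutivity you should actually exhibit the witness component, as the paper does with $\bm{w}_2\circ T_{a,b,c}=\bm{z}_2+\bm{y}_2\bm{x}_1+\tfrac{1}{c}(\bm{z}_2\bm{x}_2+\bm{y}_2(a+\bm{x}_1\bm{x}_2))(b\bm{x}_1-\bm{y}_1\bm{z}_2)\neq\bm{z}_2$.
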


\begin{proof}
    The local Yang--Baxter equation \eqref{local-YB-NLS} is equivalent to the system of polynomial equations:
    \begin{subequations}
        \begin{align}
            &(a+\bm{u}_1 \bm{u}_2)(b+\bm{v}_1\bm{v}_2)=(b+\bm{y}_1\bm{y}_2)(a+\bm{x}_1\bm{x}_2),\label{NCNLS-1}\\
            &(a+\bm{u}_1\bm{u}_2)\bm{v}_1\bm{w}_2+\bm{u}_1(c+\bm{w}_1\bm{w}_2)=(b+\bm{y}_1\bm{y}_2)\bm{x}_1,\label{NCNLS-2}\\
            &(a+\bm{u}_1\bm{u}_2)\bm{v}_1+\bm{u}_1\bm{w}_1=\bm{y}_1,\label{NCNLS-3}\\
            &\bm{u}_2(b+\bm{v}_1\bm{v}_2)=\bm{z}_1\bm{y}_2(a+\bm{x}_1\bm{x}_2)+(c+\bm{z}_1\bm{z}_2)\bm{x}_2,\label{NCNLS-4}\\
            &\bm{u}_2\bm{v}_1\bm{w}_2+\bm{w}_1\bm{w}_2=\bm{z}_1\bm{y}_2\bm{x}_1+\bm{z}_1\bm{z}_2,\label{NCNLS-5}\\
            &\bm{u}_2\bm{v}_1+\bm{w}_1=\bm{z}_1,\label{NCNLS-6}\\
            &\bm{v}_2=\bm{y}_2(a+\bm{x}_1\bm{x}_2)+\bm{z}_2\bm{x}_2,\label{NCNLS-7}\\
            &\bm{w}_2=\bm{y}_2\bm{x}_1+\bm{z}_2.\label{NCNLS-8}
        \end{align}
    \end{subequations}

    Solving \eqref{NCNLS-6} for ``$\bm{u}_2\bm{v}_1$'' and substituting it into equation \eqref{NCNLS-3}, we obtain:
    \begin{equation}\label{NCNLS-3-2}
        a\bm{v}_1+\bm{u}_1\bm{z}_1=\bm{y}_1.
    \end{equation}
    Equation \eqref{NCNLS-1}, with the use of equations \eqref{NCNLS-6} and \eqref{NCNLS-7} is equivalent to \eqref{NC-Tet-NLS-a}. Now, substituting the obtained $\bm{u}_1$ into \eqref{NCNLS-3-2}, we can rewrite the latter equivalently as \eqref{NC-Tet-NLS-c}. Then, substituting into \eqref{NCNLS-4} the obtained $\bm{v}_1$ and also $\bm{v}_2$ from \eqref{NCNLS-7}, after some calculations, we can write equation \eqref{NCNLS-4} equivalently to \eqref{NC-Tet-NLS-b}. Furthermore, equation \eqref{NC-Tet-NLS-e} is equivalent to \eqref{NCNLS-4} after substitution into the latter of the obtained $\bm{u}_2$ and $\bm{v}_1$.

    Finally, we have that
    $$
    \bm{w}_2\circ T_{a,b,c}=\bm{z}_2+\bm{y}_2\bm{x}_1+\frac{1}{c}(\bm{z}_2\bm{x}_2+\bm{y}_2(a+\bm{x}_1\bm{x}_2))(b \bm{x}_1-\bm{y}_1\bm{z}_2)\neq \bm{z}_2,
    $$
    which proves that map \eqref{NC-Tet-NLS} is noninvolutive.
\end{proof}

The above is the noncommutative avatar of the NLS type tetrahedron map constructed in \cite{Sokor-2020-2}. For the commutative version the tetrahedron 3-simplex property could be verified by straightforward substitution which is impossible in the noncommutative case even with the use of programmes for symbolic computation. In what follows we provide a proof via the matrix six-factorisation problem.

\begin{theorem}
    Map \eqref{NC-Tet-NLS} is a noncommutative tetrahedron 3-simplex map.
\end{theorem}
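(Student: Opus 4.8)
The plan is to apply Theorem \ref{six-factorisation}, so the task reduces to verifying that the matrix six-factorisation problem \eqref{6-fac} built from the $4\times 4$ extensions of ${\rm K}(\bm{x}_1,\bm{x}_2,a)$ in \eqref{Lax-NLS} admits only the trivial solution. First I would write out the two sides of \eqref{6-fac} with the substitutions $a_1=a$, $a_2=b$, $a_3=c$, $a_4=d$, $a_5=e$, $a_6=f$, where each factor is a $4\times 4$ extension of the $2\times 2$ block $\left(\begin{smallmatrix} a+\bm{x}_1\bm{x}_2 & \bm{x}_1\\ \bm{x}_2 & 1\end{smallmatrix}\right)$. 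The products on each side are $4\times 4$ matrices over $\mathcal{R}$, so comparing entries gives a system of (noncommutative) polynomial equations relating the hatted variables $\hat{\bm{x}},\hat{\bm{y}},\hat{\bm{z}},\hat{\bm{r}},\hat{\bm{s}},\hat{\bm{t}}$ to the unhatted ones; the goal is to show this system forces $\hat{\bm{x}}=\bm{x}$ and likewise for all six pairs.

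The key structural observation I would exploit is that each extension ${\rm K}^4_{ij}$ is the identity outside the rows/columns $i,j$, so the off-diagonal positions of the product encode the propagation of each variable through the factorisation in a controlled way. Concretely, I would read off the entries of the $4\times 4$ product that involve only one or two of the blocks — these are the ``boundary'' entries where few factors overlap — and isolate the simplest equations first. Because each $2\times 2$ block ${\rm K}$ has lower-left entry equal to the bare variable $\bm{x}_2$ (and the block is invertible with $\det{\rm K}=a\in Z(\mathcal{R})$), the strategy is to peel off the outermost factors: compare the positions governed by the $(3,4)$ and $(1,2)$ extensions, which appear only once on each side, to pin down one or two of the hatted variables directly, then substitute back and repeat the elimination inward exactly as was done in the tri-factorisation proof for the Boussinesq map. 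The centrality of $a,b,c,d,e,f$ in $Z(\mathcal{R})$ is what allows these scalar factors to be cancelled cleanly at each stage.

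The main obstacle I anticipate is the bookkeeping of noncommutativity during the elimination: unlike the commutative case one cannot freely reorder products, so each cancellation must respect the left/right placement of factors, and the invertibility of the relevant combinations (e.g.\ blocks of the form $a+\bm{x}_1\bm{x}_2$ or the composite denominators appearing in \eqref{NC-Tet-NLS}) must be justified in the division ring $\mathcal{R}$ before dividing. I would organise the argument so that at each step the quantity being inverted is manifestly a nonzero element of $\mathcal{R}$ or a central scalar, thereby keeping the eliminations valid. Once all six equalities $\hat{\bm{x}}=\bm{x},\dots,\hat{\bm{t}}=\bm{t}$ are established, Theorem \ref{six-factorisation} immediately yields that map \eqref{NC-Tet-NLS} is a parametric tetrahedron (3-simplex) map, completing the proof.
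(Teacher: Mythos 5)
Your proposal takes essentially the same route as the paper's proof: invoke Theorem \ref{six-factorisation} for the matrix ${\rm K}$ of \eqref{Lax-NLS}, set up the six-factorisation problem \eqref{6-fac} with its $4\times 4$ extensions, and show by entry-wise comparison and successive elimination (using centrality of the parameters and invertibility in the division ring $\mathcal{R}$) that only the trivial solution survives. The paper does exactly this, carrying out the elimination explicitly on the resulting system of twelve noncommutative polynomial equations; your sketch defers that computation, but the strategy and its justification coincide with the paper's.
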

\begin{proof}
Consider the $4\times 4$ extensions of matrix \eqref{Lax-NLS}, namely the following:
{\small
\begin{align}\label{Kij4}
        &{\rm K}^4_{12}(\bm{x}_1,\bm{x}_2,a)=\begin{pmatrix}a +\bm{x}_1\bm{x}_2 & \bm{x}_1 & 0 & 0 \\ \bm{x}_2 & 1 & 0 & 0 \\ 0 & 0 & 1 & 0 \\ 0 & 0 & 0 & 1\end{pmatrix},\quad
    {\rm K}^4_{13}(\bm{x}_1,\bm{x}_2,a)=\begin{pmatrix}a +\bm{x}_1\bm{x}_2 & 0 & \bm{x}_1 & 0 \\ 0 & 1 & 0 & 0 \\ \bm{x}_2 & 0 & 1 & 0 \\ 0 & 0 & 0 & 1\end{pmatrix},\nonumber\\
        &{\rm K}^4_{23}(\bm{x}_1,\bm{x}_2,a)=\begin{pmatrix}1 & 0 & 0 & 0 \\ 0 & a +\bm{x}_1\bm{x}_2 & \bm{x}_1 & 0 \\ 0 & \bm{x}_2 & 1 & 0 \\ 0 & 0 & 0 & 1\end{pmatrix},\quad {\rm K}^4_{14}(\bm{x}_1,\bm{x}_2,a)=\begin{pmatrix}a +\bm{x}_1\bm{x}_2 & 0 & 0 & \bm{x}_1 \\ 0 & 1 & 0 & 0 \\ 0 & 0 & 1 & 0 \\ \bm{x}_2 & 0 & 0 & 1\end{pmatrix}, \\
        & {\rm K}^4_{24}(\bm{x}_1,\bm{x}_2,a)=\begin{pmatrix}1 & 0 & 0 & 0 \\ 0 & a +\bm{x}_1\bm{x}_2 & 0 & \bm{x}_1 \\ 0 & 0 & 1 & 0 \\ 0 & \bm{x}_2 & 0 & 1\end{pmatrix}, \quad
    {\rm K}^4_{34}(\bm{x}_1,\bm{x}_2,a)=\begin{pmatrix}1 & 0 & 0 & 0 \\ 0 & 1 & 0 & 0 \\ 0 & 0 & a +\bm{x}_1\bm{x}_2 & \bm{x}_1 \\ 0 & 0 & \bm{x}_2 & 1\end{pmatrix},
\end{align}}
and substitute them to the Lax equation:
\begin{align*}
&{\rm K}^4_{34}(\bm{t}_1,\bm{t}_2;a_6){\rm K}^4_{24}(\bm{s}_1,\bm{s}_2;a_5){\rm K}^4_{14}(\bm{r}_1,\bm{r}_2;a_4){\rm K}^4_{23}(\bm{z}_1,\bm{z}_2;a_3){\rm K}^4_{13}(\bm{y}_1,\bm{y}_2;a_2){\rm K}^4_{12}(\bm{x}_1,\bm{x}_2;a_1)=\\
&{\rm K}^4_{34}(\tilde{\bm{t}}_1,\tilde{\bm{t}}_2;a_6){\rm K}^4_{24}(\tilde{\bm{s}}_1,\tilde{\bm{s}}_2;a_5){\rm K}^4_{14}(\tilde{\bm{r}}_1,\tilde{\bm{r}}_2;a_4){\rm K}^4_{23}(\tilde{\bm{z}}_1,\tilde{\bm{z}}_2;a_3){\rm K}^4_{13}(\tilde{\bm{y}}_1,\tilde{\bm{y}}_2;a_2){\rm K}^4_{12}(\tilde{\bm{x}}_1,\tilde{\bm{x}}_2;a_1).
\end{align*}

The above matrix six-factorisation problem is equivalent to the system of polynomial equations:{\small
\begin{subequations}\label{Left-6fac-NLS}
\begin{align}
&\tilde{\bm{r}}_1 = \bm{r}_1, \qquad \tilde{\bm{s}}_1 =\bm{s}_1,\qquad  \tilde{\bm{t}}_1=\bm{t}_1,\label{Left-6fac-NLS-a}\\
&(a_4+\tilde{\bm{r}}_1\tilde{\bm{r}}_2)(a_2+\tilde{\bm{y}}_1\tilde{\bm{y}}_2)(a_1+\tilde{\bm{x}}_1\tilde{\bm{x}}_2)=(a_4+\bm{r}_1\bm{r}_2)(a_2+\bm{y}_1\bm{y}_2)(a_1+\bm{x}_1\bm{x}_2),\label{Left-6fac-NLS-b}\\
&(a_4+\tilde{\bm{r}}_1\tilde{\bm{r}}_2)(a_2+\tilde{\bm{y}}_1\tilde{\bm{y}}_2)\tilde{\bm{x}}_1 =(a_4+\bm{r}_1\bm{r}_2)(a_2+\bm{y}_1\bm{y}_2)\bm{x}_1,\label{Left-6fac-NLS-c}\\
&(a_4+\tilde{\bm{r}}_1\tilde{\bm{r}}_2)\tilde{\bm{y}}_1=(a_4+\bm{r}_1\bm{r}_2)\bm{y}_1,\label{Left-6fac-NLS-d}\\
&\tilde{\bm{s}}_1\tilde{\bm{r}}_2(a_2+\tilde{\bm{y}}_1\tilde{\bm{y}}_2)(a_1+\tilde{\bm{x}}_1\tilde{\bm{x}}_2)+(a_5+\tilde{\bm{s}}_1\tilde{\bm{s}}_2)(a_3+\tilde{\bm{z}}_1\tilde{\bm{z}}_2)\tilde{\bm{x}}_2+(a_5+\tilde{\bm{s}}_1\tilde{\bm{s}}_2)\tilde{\bm{z}}_1\tilde{\bm{y}}_2(a_1+\tilde{\bm{x}}_1\tilde{\bm{x}}_2) = \nonumber\\
&\bm{s}_1\bm{r}_2(a_2+\bm{y}_1\bm{y}_2)(a_1+\bm{x}_1\bm{x}_2)+(a_5+\bm{s}_1\bm{s}_2)(a_3+\bm{z}_1\bm{z}_2)\bm{x}_2+(a_5+\bm{s}_1\bm{s}_2)\bm{z}_1\bm{y}_2(a_1+\bm{x}_1\bm{x}_2),\label{Left-6fac-NLS-e}\\
&\tilde{\bm{s}}_1\tilde{\bm{r}}_2(a_2+\tilde{\bm{y}}_1\tilde{\bm{y}}_2)\tilde{\bm{x}}_1+(a_5+\tilde{\bm{s}}_1\tilde{\bm{s}}_2)(a_3+\tilde{\bm{z}}_1\tilde{\bm{z}}_2)+(a_5+\tilde{\bm{s}}_1\tilde{\bm{s}}_2)\tilde{\bm{z}}_1\tilde{\bm{y}}_2\tilde{\bm{x}}_1 = \nonumber\\ 
&\bm{s}_1\bm{r}_2(a_2+\bm{y}_1\bm{y}_2)\bm{x}_1+(a_5+\bm{s}_1\bm{s}_2)(a_3+\bm{z}_1\bm{z}_2)+(a_5+\bm{s}_1\bm{s}_2)\bm{z}_1\bm{y}_2\bm{x}_1,\label{Left-6fac-NLS-f}\\
&\tilde{\bm{t}}_1\tilde{\bm{r}}_2(a_2+\tilde{\bm{y}}_1\tilde{\bm{y}}_2)(a_1+\tilde{\bm{x}}_1\tilde{\bm{x}}_2)+\tilde{\bm{t}}_1\tilde{\bm{s}}_2(a_3+\tilde{\bm{z}}_1\tilde{\bm{z}}_2)\tilde{\bm{x}}_2+(a_6+\tilde{\bm{t}}_1\tilde{\bm{t}}_2)\tilde{\bm{z}}_2\tilde{\bm{x}}_2+(\tilde{\bm{t}}_1\tilde{\bm{s}}_2\tilde{\bm{z}}_1+a_6+\tilde{\bm{t}}_1\tilde{\bm{t}}_2)\tilde{\bm{y}}_2(a_1+\tilde{\bm{x}}_1\tilde{\bm{x}}_2)=\nonumber\\
&\bm{t}_1\bm{r}_2(a_2+\bm{y}_1\bm{y}_2)(a_1+\bm{x}_1\bm{x}_2)+\bm{t}_1\bm{s}_2(a_3+\bm{z}_1\bm{z}_2)\bm{x}_2+(a_6+\bm{t}_1\bm{t}_2)\bm{z}_2\bm{x}_2+(\bm{t}_1\bm{s}_2\bm{z}_1+a_6+\bm{t}_1\bm{t}_2)\bm{y}_2(a_1+\bm{x}_1\bm{x}_2),\label{Left-6fac-NLS-g}\\
&\tilde{\bm{t}}_1\tilde{\bm{r}}_2(a_2+\tilde{\bm{y}}_1\tilde{\bm{y}}_2)\tilde{\bm{x}}_1+\tilde{\bm{t}}_1\tilde{\bm{s}}_2(a_3+\tilde{\bm{z}}_1\tilde{\bm{z}}_2)+(a_6+\tilde{\bm{t}}_1\tilde{\bm{t}}_2)\tilde{\bm{z}}_2+(\tilde{\bm{t}}_1\tilde{\bm{s}}_2\tilde{\bm{z}}_1+a_6+\tilde{\bm{t}}_1\tilde{\bm{t}}_2)\tilde{\bm{y}}_2\tilde{\bm{x}}_1=\nonumber\\
&\bm{t}_1\bm{r}_2(a_2+\bm{y}_1\bm{y}_2)\bm{x}_1+
\bm{t}_1\bm{s}_2(a_3+\bm{z}_1\bm{z}_2)+(a_6+\bm{t}_1\bm{t}_2)\bm{z}_2+(\bm{t}_1\bm{s}_2\bm{z}_1+a_6+\bm{t}_1\bm{t}_2)\bm{y}_2\bm{x}_1,\label{Left-6fac-NLS-h}\\
&\tilde{\bm{t}}_1\tilde{\bm{r}}_2\tilde{\bm{y}}_1+\tilde{\bm{t}}_1\tilde{\bm{s}}_2\tilde{\bm{z}}_1+a_6+\tilde{\bm{t}}_1\tilde{\bm{t}}_2=\bm{t}_1\bm{r}_2\bm{y}_1+\bm{t}_1\bm{s}_2\bm{z}_1+a_6+\bm{t}_1\bm{t}_2,\label{Left-6fac-NLS-i}\\
&\tilde{\bm{r}}_2(a_2+\tilde{\bm{y}}_1\tilde{\bm{y}}_2)(a_1+\tilde{\bm{x}}_1\tilde{\bm{x}}_2)+
\tilde{\bm{s}}_2(a_3+\tilde{\bm{z}}_1\tilde{\bm{z}}_2)\tilde{\bm{x}}_2+\tilde{\bm{t}}_2\tilde{\bm{z}}_2\tilde{\bm{x}}_2+(\tilde{\bm{s}}_2\tilde{\bm{z}}_1+\tilde{\bm{t}}_2)\tilde{\bm{y}}_2(a_1+\tilde{\bm{x}}_1\tilde{\bm{x}}_2)=\nonumber\\
&\bm{r}_2(a_2+\bm{y}_1\bm{y}_2)(a_1+\bm{x}_1\bm{x}_2)+\bm{s}_2(a_3+\bm{z}_1\bm{z}_2)\bm{x}_2+\bm{t}_2\bm{z}_2\bm{x}_2+(\bm{s}_2\bm{z}_1+\bm{t}_2)\bm{y}_2(a_1+\bm{x}_1\bm{x}_2),\label{Left-6fac-NLS-j}\\
&\tilde{\bm{r}}_2(a_2+\tilde{\bm{y}}_1\tilde{\bm{y}}_2)\tilde{\bm{x}}_1+\tilde{\bm{s}}_2(a_3+\tilde{\bm{z}}_1\tilde{\bm{z}}_2)+\tilde{\bm{t}}_2\tilde{\bm{z}}_2+(\tilde{\bm{s}}_2\tilde{\bm{z}}_1+\tilde{\bm{t}}_2)\tilde{\bm{y}}_2\tilde{\bm{x}}_1=\nonumber\\
&\bm{r}_2(a_2+\bm{y}_1\bm{y}_2)\bm{x}_1+\bm{s}_2(a_3+\bm{z}_1\bm{z}_2)+\bm{t}_2\bm{z}_2+(\bm{s}_2\bm{z}_1+\bm{t}_2)\bm{y}_2\bm{x}_1,\label{Left-6fac-NLS-k}\\
&\tilde{\bm{r}}_2\tilde{\bm{y}}_1+\tilde{\bm{s}}_2\tilde{\bm{z}}_1+\tilde{\bm{t}}_2 = \bm{r}_2\bm{y}_1+\bm{s}_2\bm{z}_1+\bm{t}_2,\label{Left-6fac-NLS-l}
\end{align}
\end{subequations}

From equations (\ref{Left-6fac-NLS-b}) and (\ref{Left-6fac-NLS-c}), we obtain 
\begin{equation}
    (\bm{x}_2-\tilde{\bm{x}}_2) = (\tilde{\bm{x}}_1^{-1}-\bm{x}_1^{-1})a_1,\label{NLS-1'},
\end{equation}
while equations (\ref{Left-6fac-NLS-c}) and (\ref{Left-6fac-NLS-d}) imply
\begin{equation}
a_2(\bm{x}_1-\bm{y}_1\tilde{\bm{y}}_1^{-1}\tilde{\bm{x}}_1)+\bm{y}_1(\bm{y}_2\bm{x}_1-\tilde{\bm{y}}_2\tilde{\bm{x}}_1) = 0.\label{NLS-2'}
\end{equation}
Then, from (\ref{Left-6fac-NLS-e})-(\ref{Left-6fac-NLS-f}), (\ref{Left-6fac-NLS-g})-(\ref{Left-6fac-NLS-h}), and (\ref{Left-6fac-NLS-j})-(\ref{Left-6fac-NLS-k}) we obtain
\begin{subequations}
\begin{align}
&a_1a_2\bm{s}_1(\tilde{\bm{r}}_2-\bm{r}_2)+a_1(\bm{s}_1\bm{r}_2\bm{y}_1+(a_5+\bm{s}_1\bm{s}_2)\bm{z}_1)(\tilde{\bm{y}}_2-\bm{y}_2)+(\bm{s}_1\bm{r}_2(a_2+\bm{y}_1\bm{y}_2)\bm{x}_1+\nonumber\\
&+(a_5+\bm{s}_1\bm{s}_2)(a_3+\bm{z}_1\bm{z}_2-\bm{z}_1\bm{y}_2\bm{x}_1))(\bm{x}_2-\tilde{\bm{x}}_2) = 0,\label{NLS-4'} \\
&a_1a_2\bm{t}_1(\tilde{\bm{r}}_2-\bm{r}_2)+a_1(\bm{t}_1\bm{r}_2\bm{y}_1+\bm{t}_1\bm{s}_2\bm{z}_1+a_6+\bm{t}_1\bm{t}_2)(\tilde{\bm{y}}_2-\bm{y}_2) + (\bm{t}_1\bm{r}_2(a_2+\bm{y}_1\bm{y}_2)\bm{x}_1+\nonumber\\
&+\bm{t}_1\bm{s}_2(a_3+\bm{z}_1\bm{z}_2)+(a_6+\bm{t}_1\bm{t}_2)\bm{z}_2+(\bm{t}_1\bm{s}_2\bm{z}_1+a_6+\bm{t}_1\bm{t}_2)\bm{y}_2\bm{x}_1)(\tilde{\bm{x}}_2-\bm{x}_2)=0,\label{NLS-7'} \\
&a_1a_2(\tilde{\bm{r}}_2-\bm{r}_2)+a_1(\bm{r}_2\bm{y}_1+\bm{s}_2\bm{z}_1+\bm{t}_2)(\tilde{\bm{y}}_2-\bm{y}_2) +(\bm{r}_2(a_2+\bm{y}_1\bm{y}_2)\bm{x}_1+\bm{s}_2(a_3+\bm{z}_1\bm{z}_2)+\nonumber\\
&\bm{t}_2\bm{z}_2+(\bm{s}_2\bm{z}_1+\bm{t}_2)\bm{y}_2\bm{x}_1)(\tilde{\bm{x}}_2-\bm{x}_2)=0,\label{NLS-10'}
\end{align}
\end{subequations}
respectively. 

Now, if we substitute the expression $a_1a_2(\tilde{\bm{r}}_2-\bm{r}_2)$  from (\ref{NLS-10'}) into equations \eqref{NLS-4'} and \eqref{NLS-7'}, we obtain
\begin{subequations}
\begin{align}
&a_1(a_5\bm{z}_1-\bm{s}_1\bm{t}_2)(\tilde{\bm{y}}_2-\bm{y}_2)+(a_5(a_3+\bm{z}_1\bm{z}_2-\bm{z}_1\bm{y}_2\bm{x}_1)-\bm{s}_1\bm{t}_2(\bm{z}_2+\bm{y}_2\bm{x}_1))(\tilde{\bm{x}}_2-\bm{x}_2) = 0, \label{NLS-4''} \\
&a_1(\tilde{\bm{y}}_2-\bm{y}_2)+(\bm{z}_2+\bm{y}_2\bm{x}_1)(\tilde{\bm{x}}_2-\bm{x}_2) = 0. \label{NLS-7''}
\end{align}
\end{subequations}
Then, substituting \eqref{NLS-7''} into \eqref{NLS-4''}, it follows that $\tilde{\bm{x}}_2 = \bm{x}_2$.
Therefore, from \eqref{NLS-4''} we obtain $\tilde{\bm{y}}_2 = \bm{y}_2$, whereas \eqref{NLS-4'} implies $\tilde{\bm{r}}_2 = \bm{r}_2$.
Moreover, from \eqref{NLS-1'} and \eqref{NLS-2'} we obtain $\tilde{\bm{x}}_1 = \bm{x}_1$ and $\tilde{\bm{y}}_1 = \bm{y}_1$.
Now, if we substitute \eqref{Left-6fac-NLS-k} into \eqref{Left-6fac-NLS-l}, we obtain  $\tilde{\bm{z}}_2 = \bm{z}_2$, and from \eqref{Left-6fac-NLS-k} it follows that $\tilde{\bm{s}}_2 = \bm{s}_2$. Finally, equations \eqref{Left-6fac-NLS-f} and \eqref{Left-6fac-NLS-l} imply $\tilde{\bm{z}}_1 = \bm{z}_1$ and $\tilde{\bm{t}}_2 = \bm{t}_2$.

From the above it follows that 
$$
\tilde{\bm{x}}_i=\bm{x}_i,\quad \tilde{\bm{y}}_i=\bm{y}_i,\quad \tilde{\bm{z}}_i=\bm{z}_i,\quad \tilde{\bm{r}}_i=\bm{r}_i, \quad \tilde{\bm{s}}_i=\bm{s}_i,\quad\text{and}\quad \tilde{\bm{t}}_i=\bm{t}_i,\quad i=1,2.
$$
Therefore, according to Theorem \ref{six-factorisation}, map \eqref{NC-Tet-NLS} is a tetrahedron 3-simplex map.
}
\end{proof}

\section{Conclusions}
In this paper we constructed commutative and noncommutative 2-simplex and 3-simplex maps related to the lattice Boussinesq and the NLS equation. 

In particular we constructed an eight-dimensional parametric Yang--Baxter map \eqref{YB-2-BSQ} which admits 4 functionally independent invariants. This indicates integrability. However, in order to claim Liouville integrability of map \eqref{YB-2-BSQ}, one must we need a Poisson bracket with respect to which the integrals \eqref{BSQ-ints-a}--\eqref{BSQ-ints-d} are in involution. It is worth noting that the first integrals \eqref{BSQ-ints-a} and \eqref{BSQ-ints-b} are Casimir functions for the Poisson bracket:
$$
\{x_1,x_3\}=\{x_2,x_4\}=\{x_3,x_4\}=\{x_3,y_1\}=\{y_1,y_3\}=\{y_2,y_4\}=\{y_3,y_4\}=\{y_3,x_1\}=1,
$$
and all the rest $\{x_i,x_j\}, \{x_i,y_i\}, \{y_i,y_j\}$ are 0. Also, the first integrals \eqref{BSQ-ints-a} and \eqref{BSQ-ints-b} are in involution with respect to the above Poisson bracket. However, the rank of the associated Poisson matrix is 6, thus we need on more integral in order to claim Liouville integrability. 

Furthermore, we constructed an eight-dimensional noncommutative Boussinesq type map \eqref{NC-BSQ-map} and we showed that it satisfies the Yang--Baxter equation. After a change of variables indicated by the Lax equation itself we squeezed down this Boussinesq type Yang--Baxter map to the to an integrable nonommmutative Boussinesq lattice system.

The Boussinesq type maps \eqref{YB-2-BSQ} and \eqref{NC-BSQ-map} were constructed using the correspondence approach \cite{Igonin-Sokor}. The choice of equations to supplement the associated correspondences was made in order to: i. construct an integrable Yang--Baxter map (with the ammount of first integrals to be half the dimension of the map) for the case of map \eqref{YB-2-BSQ}, and ii. construct a map which can be squeezed down to the integrable lattice Boussinesq system \eqref{NC-BSQ-map}.

Finally, we employed a Darboux transformation for the noncommutative NLS equation and constructed a six-dimensional noninvolutive NLS type map on and arbitrary noncommutative division ring that is a noncommutative version of the NLS parametric Yang--Baxter map which appeared in \cite{Sokor-2020-2}. We showed that this map solves the Zamolodchikov tetrahedron equation.

The commutative version of map \eqref{NC-Tet-NLS} was derived in \cite{Sokor-2020-2} as restriction of map $
(x_1,x_2,X,y_1,y_2,Y,z_1,z_2,Z)\overset{T}{\longrightarrow} (u_1,u_2,U,v_1,v_2,V,w_1,w_2,W)
$, given by:
{\small
\begin{align}
x_1\mapsto u_1 &=\frac{x_1(y_1y_2-Y)+y_1z_2}{z_1z_2-Z},\nonumber\\
x_2\mapsto u_2 &=\frac{(x_1x_2-X)(y_2z_1X+x_2Z)(z_1z_2-Z)}{y_1y_2z_1(x_1y_2+z_2)X-(x_1y_2z_1+z_1z_2-Z)XY+x_2[y_1z_2+x_1(y_1y_2-Y)]Z},\nonumber\\
X\mapsto U &=\frac{(x_1x_2-X)(y_1y_2-Y)X}{y_1y_2z_1(x_1y_2+z_2)X-(x_1y_2z_1+z_1z_2-Z)XY+x_2[y_1z_2+x_1(y_1y_2-Y)]Z},\label{Tet-NLS-9D}\\
y_1\mapsto v_1 &=\frac{x_1z_1(y_1y_2-Y)+y_1Z}{(x_1x_2-X)(z_1z_2-Z)},\nonumber\\
y_2\mapsto v_2 &=x_2z_2+y_2X,\nonumber\\
Y\mapsto V &=\frac{y_1y_2z_1(x_1y_2+z_2)X-(x_1y_2z_1+z_1z_2-Z)XY+x_2[y_1z_2+x_1(y_1y_2-Y)]Z}{(x_1x_2-X)(z_1z_2-Z)},\nonumber\\
z_1\mapsto w_1 &=\frac{[x_2y_1Z-z_1(y_1y_2-Y)X](z_1z_2-Z)}{y_1y_2z_1(x_1y_2+z_2)X-(x_1y_2z_1+z_1z_2-Z)XY+x_2[y_1z_2+x_1(y_1y_2-Y)]Z},\nonumber\\
z_2\mapsto w_2 &=x_1y_2+z_2,\nonumber\\
Z\mapsto W &=\frac{(x_1x_2-X)(z_1z_2-Z)YZ}{y_1y_2z_1(x_1y_2+z_2)X-(x_1y_2z_1+z_1z_2-Z)XY+x_2[y_1z_2+x_1(y_1y_2-Y)]Z},\nonumber
\end{align}}
on invariant leaves, namely the level sets of invariants $I_1=X-x_1x_2$, $I_2=Y-y_1y_2$ and $I_3=Z-z_1z_2$. However, here we notice that map \eqref{Tet-NLS-9D} also admits the invariants $I_4=XY$ and $I_5=YZ$, which means that other solutions to the parametric 3-simplex equation may exist that are restrictions map \eqref{Tet-NLS-9D} on other invariant leaves.

The results of this paper can be extended in the following ways. 

\begin{itemize}
    \item Study the Liouville integrability of all the maps constructed in this paper. For the noncommutative ones, more first integrals are needed.
    \item Study the solutions of system \eqref{BSQ-quad-G}. For that, one may construct B\"acklund transformations as in \cite{FKRX}.
    \item Find other parametric restrictions of map \eqref{Tet-NLS-9D} and their quantisations.
\end{itemize}

\section*{Acknowledgements}
This work was funded by the Russian Science Foundation project No. 20-71-10110 (https://rscf.ru/en/project/23-71-50012/).
Part of this  work,  namely the proofs of Theorems 3.3, 4.4 and 4.6, was carried out within the framework of a development programme for the Regional Scientific and Educational Mathematical Centre of the P.G. Demidov Yaroslavl State University with financial support from the Ministry of Science and Higher Education of the Russian Federation (Agreement on provision of subsidy from the federal budget No. 075-02-2024-1442).


\begin{thebibliography}{100}
\bibitem{Adler}
{V. Adler.} {Recuttings of polygons.} {Funktsional. Anal. i Prilozhen. 27 79--82 (1993).}

\bibitem{Giota-Miky}
{P. Adamopoulou and G. Papamikos,} {Entwining Yang--Baxter maps over Grassmann algebras} {Physica D: Nonlinear Phenomena \textbf{472}} {134469} (2025).

\bibitem{Clarkson}
{P.A. Clarkson and E. Dowie,} {Rational Solutions of the Boussinesq Equation and Applications to Rogue Waves.} {Transactions of Mathematics and Its Applications, 1, tnx003. Transactions of Mathematics and Its Applications, 1, tnx003.} (2017).

\bibitem{Dimakis-Hoissen-2015}
{A. Dimakis and F. M\"uller-Hoissen,} {Simplex and polygon equations,} {SIGMA 11 (2015)} {042}.

\bibitem{Doliwa-2014}
{A. Doliwa.} {Non-Commutative Rational Yang–Baxter Maps.} {Lett. Math. Phys. \textbf{104}} {(2014)} {299--309}.

\bibitem{Doliwa-Kashaev}
{A. Doliwa, R.M. Kashaev.} {Non-commutative bi-rational maps satisfying Zamolodchikov equation equation, and Desargues lattices,} {J. Math. Phys.} {61 (2020)} {092704}.

\bibitem{FKRX}
{X. Fisenko, S. Konstantinou-Rizos, and P. Xenitidis.} 
{A discrete Darboux-Lax scheme for integrable difference equations} 
{\em Chaos, Solitons and Fractals}  {\textbf{158}} {112059} (2022).

\bibitem{GKM}
{G.G. Grahovski, S. Konstantinou-Rizos, A.V. Mikhailov,} {Grassmann extensions of Yang--Baxter maps,} {J. Phys. A} {49 (2016)} {145202}.

\bibitem{Igonin-Sokor}
{S. Igonin and S. Konstantinou-Rizos.
Local Yang--Baxter correspondences and set-theoretical solutions to the Zamolodchikov tetrahedron equation. 
\emph{J. Phys. A: Math. Theor. \textbf{56}} 275202 (2023).}

\bibitem{Kashaev-Sergeev} 
{R.M. Kashaev, I.G. Koperanov, and S.M. Sergeev,} {Functional Tetrahedron Equation,} {Theor. Math. Phys.} {117 (1998)} {370--384}.

\bibitem{Kassotakis-2023}
{P. Kassotakis,} {Non-Abelian hierarchies of compatible maps,
associated integrable difference systems and Yang--Baxter maps,} {Nonlinearity} {36} (2023) {2514}.

\bibitem{Kassotakis-Kouloukas}
{P. Kassotakis and T. Kouloukas,} {On non-abelian quadrirational Yang--Baxter maps,} {J. Phys. A: Math. Theor.} {55} (2022) {175203}.

\bibitem{Kassotakis-Kouloukas-Maciej}
{P. Kassotakis, T. Kouloukas and M. Nieszporski.} {Non-Abelian elastic collisions, associated difference systems of equations and discrete analytic functions.} {arXiv:2412.03543} (2024).

\bibitem{Pavlos-Maciej-2}
{P. Kassotakis, M. Nieszporski,} {On non-multiaffine consistent-around-the-cube lattice equations,} {Phys. Lett. A 376 (2012)} {3135--3140}.

\bibitem{Kassotakis-Tetrahedron}
{P. Kassotakis, M. Nieszporski, V. Papageorgiou, and A. Tongas,} {Tetrahedron maps and symmetries of three dimensional integrable discrete equations,} {J. Math. Phys. 60 (2019)} {123503}.

\bibitem{Sokor-2020}
{S. Konstantinou-Rizos,}
{On the 3D consistency of a Grassmann extended lattice Boussinesq system.} 
{\em Nuclear Phys. B } {\textbf{951}} (2020) {114878}.

\bibitem{Sokor-2020-2}
{S. Konstantinou-Rizos,}
{Nonlinear Schr\"odinger type tetrahedron maps.} 
{\em Nuclear Phys. B } {\textbf{960}} (2020) {115207}.

\bibitem{Sokor-2022}
{S. Konstantinou-Rizos.}
{Noncommutative solutions to Zamolodchikov's tetrahedron equation and matrix six-factorisation problems.} 
{Physica D: Nonlinear Phenomena} {\textbf{440}} (2022) {133466} 

\bibitem{Sokor-Kouloukas}
{S. Konstantinou-Rizos and T. Kouloukas,} {A noncommutative discrete potential {K}d{V} lift} {J. Math. Phys.} {\textbf{59}} (2018) {063506}.

\bibitem{Sokor-Sasha}
{S. Konstantinou-Rizos and A.V. Mikhailov.}
{Darboux transformations, finite reduction groups and related Yang--Baxter maps.} {J. Phys. A: Math. Theor.} {\textbf{46}} (2013) {425201}

\bibitem{Sokor-Sasha-2016}
{S. Konstantinou-Rizos and A.V. Mikhailov.}
{Anticommutative extension of the Adler map.} {J. Phys. A: Math. Theor.} {\textbf{49}} (2016) {30LT03}

\bibitem{Sokor-Nikitina}
{S. Konstantinou-Rizos and A.A. Nikitina.}
{Yang–Baxter maps of KdV, NLS and DNLS type on division rings.} {Physica D: Nonlinear Phenomena} {\textbf{465}} (2024) {134213}

\bibitem{Sokor-Pap}
{S. Konstantinou-Rizos and G. Papamikos.}
{Entwining Yang--Baxter maps related to NLS type equations.} {J. Phys. A: Math. Theor.} {\textbf{52}} (2019) {485201}.

\bibitem{Sokor-PaulX}
{S. Konstantinou-Rizos and P. Xenitidis.}
{Integrable discretisations of noncommutative NLS equation.} {(in preparation)} (2025).

\bibitem{Kouloukas}
{T.E. Kouloukas and V.G. Papageorgiou} {Yang--Baxter maps with
  first-degree-polynomial {$2\times 2$} Lax matrices} {J. Phys. A} {\textbf{42}} {404012} (2009).

  \bibitem{Korepanov}
{I.G. Korepanov,} {Algebraic integrable dynamical systems, $2+1$-dimensional models in wholly discrete space-time, and inhomogeneous models in 2-dimensional statistical physics,} {(1995) solv-int/9506003.}

\bibitem{MPW}
{A.V. Mikhailov, G. Papamikos and J.P. Wang.}
Darboux transformation for the vector sine-Gordon equation and integrable equations on a sphere. {Lett. Math. Phys.} {\textbf{106}} (2016) {973--996}.

\bibitem{Nijhoff}
{J.M. Maillet, F. Nijhoff,} {The tetrahedron equation and the four-simplex equation,} {Phys. Lett. A 134 (1989)} {221--228}.

\bibitem{Pap-Tongas-Veselov}
{V.G. Papageorgiou, A.G. Tongas, A.P. Veselov.} {Yang--Baxter maps and symmetries of integrable equations on quad-graphs,} {J. Math. Phys. 47 (2006)} {083502}.

\bibitem{Sergeev}
{S.M. Sergeev,} {Solutions of the Functional Tetrahedron Equation Connected with the Local Yang--Baxter Equation for the Ferro-Electric Condition,} {Lett. Math. Phys. 45 (1998)}  {113--119}.

\bibitem{Veselov2}
{Y. Suris, A. Veselov,} {Lax matrices for Yang--Baxter maps} {J. Nonlinear Math. Phys. 10 (2003)}  {223--230}.

\bibitem{Tongas-Nijhoff}
{A. Tongas and F. Nijhoff.} {The Boussinesq integrable system. Compatible lattice and continuum structures} {\em Glasgow Math. J.} {47} {205--219} (2005).
\end{thebibliography}
\end{document}